\definecolor{subsectioncolor}{rgb}{0,0,0}
\newcommand{\mat}[1]{\ensuremath{\begin{bmatrix} #1 \end{bmatrix}}}
\newcommand{\vc}[1]{\ensuremath{\begin{bmatrix} #1 \end{bmatrix}}}
\newcommand{\eps}{\varepsilon}
\renewcommand{\rho}{\varrho}
\renewcommand{\Pr}[1]{\mathrm{Pr}\left[#1\right]}
\newcommand{\E}[1]{\mathrm{E}\left[#1\right]}
\newcommand{\norm}[1]{\left\lVert #1 \right\rVert}
\newcommand{\norminf}[1]{\left\lVert #1 \right\rVert_{\infty}}
\renewcommand{\text}[1]{\textnormal{#1}}
\newcommand{\conv}[1]{\mathrm{conv}\left(#1\right)}
\newcommand{\col}[1]{\mathrm{col}\left(#1\right)}
\newcommand{\lag}[1]{\mathrm{lag}\left(#1\right)}
\newcommand{\evmax}[1]{\lambda_{\mathrm{max}}\left(#1\right)}
\newcommand{\evmin}[1]{\lambda_{\mathrm{min}}\left(#1\right)}
\newcommand{\svmax}[1]{\sigma_{\mathrm{max}}\left(#1\right)}
\newcommand{\trace}[1]{\mathrm{tr}\left(#1\right)}
\newcommand{\rank}[1]{\mathrm{rank}\left(#1\right)}
\newcommand{\conf}{\delta}
\newcommand{\hankel}[2]{\bm{H}_{#1}\hspace{-2pt}\left(#2\right)}
\newtheorem{theorem}{Theorem}
\newtheorem{lemma}{Lemma}
\newtheorem{proposition}{Proposition}
\newtheorem{remark}{Remark}
\newtheorem{definition}{Definition}
\newtheorem{assumption}{Assumption}
\newtheorem{problem}{Problem}
\newacronym{MPC}{MPC}{model predictive control}
\newacronym{iss}{ISS}{input-to-state stability}
\newglossaryentry{x}{type=symbols,
	sort={x},
	dimension={\ensuremath{ n }},
	name={\ensuremath{\bm{x}}},
	description={State}
}
\newglossaryentry{xi}{type=symbols,
	sort={xi},
	dimension={\ensuremath{ n_{\xi} }},
	name={\ensuremath{\bm{\xi}}},
	description={Extended state}
}
\newglossaryentry{y}{type=symbols,
	sort={y},
	dimension={\ensuremath{ p }},
	name={\ensuremath{\bm{y}}},
	description={Output}
}
\newglossaryentry{u}{type=symbols,
	sort={u},
	dimension={\ensuremath{ m }},
	name={\ensuremath{\bm{u}}},
	description={Input}
}
\newglossaryentry{v}{type=symbols,
	sort={v},
	dimension={\ensuremath{ m }},
	name={\ensuremath{\bm{v}}},
	description={Input correction term}
}
\newglossaryentry{d}{type=symbols,
	sort={d},
	dimension={\glsd{y}},
	name={\ensuremath{\bm{d}}},
	description={Disturbance}
}
\newglossaryentry{var}{type=symbols,
	sort={var},
	dimension={\ensuremath{n_\zeta}},
	name={\ensuremath{\bm{\zeta}}},
	description={Deterministic decision variables}
}
\newglossaryentry{dset}{type=symbols,
	sort={dset},
	dimension={\glsd{y}},
	name={\ensuremath{\mathbb{D}}},
	description={Disturbance constraint set}
}
\newglossaryentry{ud}{type=symbols,
	sort={ud},
	dimension={\ensuremath{ \mathcal{U} }},
	name={\ensuremath{\bm{u}^{\mathrm{d}}}},
	description={Input data}
}
\newglossaryentry{vd}{type=symbols,
	sort={vd},
	dimension={\ensuremath{ \mathcal{V} }},
	name={\ensuremath{\bm{v}^{\mathrm{d}}}},
	description={Input correction term data}
}
\newglossaryentry{dd}{type=symbols,
	sort={dd},
	dimension={\ensuremath{ \mathcal{D} }},
	name={\ensuremath{\bm{d}^{\mathrm{d}}}},
	description={Disturbance data}
}
\newglossaryentry{xd}{type=symbols,
	sort={xd},
	dimension={\ensuremath{ \mathcal{X} }},
	name={\ensuremath{\bm{x}^{\mathrm{d}}}},
	description={State data}
}
\newglossaryentry{xid}{type=symbols,
	sort={xid},
	dimension={\ensuremath{ \mathcal{X} }},
	name={\ensuremath{\bm{\xi}^{\mathrm{d}}}},
	description={Extended state data}
}
\newglossaryentry{yd}{type=symbols,
	sort={yd},
	dimension={\ensuremath{ \mathcal{Y} }},
	name={\ensuremath{\bm{y}^{\mathrm{d}}}},
	description={Output data}
}
\newglossaryentry{Tini}{type=symbols,
	sort={Tini},
	dimension={\ensuremath{ 1 }},
	name={\ensuremath{T_{\mathrm{p}}}},
	description={Upper bound of lag}
}
\newglossaryentry{Tf}{type=symbols,
	sort={Tf},
	dimension={\ensuremath{ 1 }},
	name={\ensuremath{T_{\mathrm{f}}}},
	description={Prediction horizon}
}
\newglossaryentry{roa}{type=symbols,
	sort={\mathbb{C}_{\xi}^{\infty}},
	dimension={\ensuremath{\glsd{xi}}},
	name={\ensuremath{\mathbb{C}_{\xi}^{\infty}}},
	description={Region of Attraction}
}
\newglossaryentry{Nsaa}{type=symbols,
	sort={Nsaa},
	dimension={\ensuremath{ 1 }},
	name={\ensuremath{N_{\mathrm{s}}^{\mathrm{avg}}}},
	description={Number of samples for sample average approximation}
}
\def\BibTeX{{\rm B\kern-.05em{\sc i\kern-.025em b}\kern-.08em
    T\kern-.1667em\lower.7ex\hbox{E}\kern-.125emX}}
\begin{document}

\title{Sampling-based Stochastic Data-driven Predictive Control under Data Uncertainty -- Extended Version}

\author{
Johannes Teutsch, Sebastian Kerz, Dirk Wollherr, and Marion Leibold
\thanks{All authors are with the Chair of Automatic Control Engineering (LSR), Department of Computer Engineering, Technical University of Munich, Theresienstr. 90, 80333 Munich, Germany {\tt\small \{johannes.teutsch, s.kerz, dirk.wollherr , marion.leibold\}@tum.de}}
}

\maketitle

\begin{abstract}
    We present a stochastic constrained output-feedback data-driven predictive control scheme for linear time-invariant systems subject to bounded additive disturbances. The approach uses data-driven predictors based on an extension of Willems' fundamental lemma and requires only a single persistently exciting input-output data trajectory. Compared to current state-of-the-art approaches, we do not rely on availability of exact disturbance data. Instead, we leverage a novel parameterization of the unknown disturbance data considering consistency with the measured data and the system class. This allows for deterministic approximation of the chance constraints in a sampling-based fashion. A robust constraint on the first predicted step enables recursive feasibility, closed-loop constraint satisfaction, and robust asymptotic stability in expectation under standard assumptions. A numerical example demonstrates the efficiency of the proposed control scheme.

\end{abstract}

\begin{IEEEkeywords}
Chance constraints, Data-driven control, Predictive control, Stochastic systems, Sampling-based chance constraints approximation.
\end{IEEEkeywords}

\section{Introduction}
Data-driven predictive control (DPC) promises safe and performant control of uncertain systems from measurement data \cite{coulson2019data,berberich2020data}.
As in model predictive control (MPC), DPC repeatedly solves a finite horizon optimal control problem (OCP), applying only the first input of the optimal input sequence at each time-step. 
The space of all finite length trajectories of a linear time-invariant (LTI) system is searched using a persistently exciting (PE) past input-output data trajectory based on Willems' fundamental lemma \cite{willems2005note}, and thus no explicit model is required.
In case additive disturbances affect the system, stochastic DPC leverages distributional information to guarantee the satisfaction of probabilistic chance constraints~\cite{pan2021stochastic, pan2023data, kerz2023datadriven}, similar to stochastic MPC \cite{mesbah2016}. This results in less conservative closed-loop behavior compared to robust data-driven control approaches, e.g., \cite{kloppelt2022novel}. 

Existing stochastic DPC schemes that come with closed-loop certificates for constraint satisfaction and stability are based on constraint tightening via chance-constrained optimization~\cite{kerz2023datadriven} or based on uncertainty propagation via polynomial chaos expansion~\cite{pan2021stochastic,pan2023data}.
However, persistently exciting measurement data of input-disturbance-state or input-disturbance-output trajectories are required, allowing for exact representation of the dynamics of the disturbed LTI system.
Other stochastic DPC schemes in literature are based on a similar assumption: In \cite{wang2022data}, a DPC scheme for stochastic systems in innovation form is presented, relying on available innovation data for predictions. Authors in \cite{yin2023stochastic} present a stochastic DPC scheme for unbounded noise, and \cite{li2023stochastic} show equivalence of stochastic DPC and MPC when data is exact.
While availability of input-state or input-output data is widely assumed in control theory, assuming that process noise can be measured or exactly estimated is restrictive, if not impractical.

In this work, we provide the first stochastic DPC scheme for LTI systems subject to process noise that only requires input-output data, while still guaranteeing recursive feasibility, satisfaction of chance constraints in closed-loop, and robust asymptotic stability in expectation (RASiE) \cite{mcallister2022nonlinear} with user-chosen confidence.
The key idea is to construct the set of disturbance trajectories consistent with the available input-output data in the sense that the combined input-disturbance-output data trajectory may have been produced by a disturbed LTI system. 
Given the disturbance distribution, this allows for sampling data-consistent disturbance trajectories with which data-driven predictors of the disturbed LTI system are constructed. These sample-based predictors are further used to deterministically approximate chance constraints via offline-sampling strategies \cite{mammarella2022chance,teutsch2023offline}, and an additional constraint on the first predicted step guarantees recursive feasibility and closed-loop constraint satisfaction \cite{lorenzen2017stochastic, mammarella2018offline}. 

The set of consistent disturbance trajectories corresponds to the set of consistent system parameters that arises from set membership identification \cite{milanese1991smi,teutsch2024adaptive} and that underpins the data informativity framework \cite{van2023informativity} and robust data-driven controller design \cite{berberich2020robustfb,alanwar2023data}.  
The switch of focus to consistent disturbances instead of model parameters lets us directly construct sample-based predictors without first mapping the disturbance distribution to a distribution over data-consistent system parameters for sample generation.
Moreover, compared to related sampling-based stochastic MPC schemes that come with stability certificates \cite{lorenzen2017stochastic,mammarella2018offline}, we do not assume that the expected finite-horizon cost is evaluated exactly. Instead, our stability analysis explicitly considers the approximation error resulting from sample average approximation (SAA) of the cost function via Hoeffding's inequality.
Our main contributions are as follows:

\begin{itemize}
    \item[C1:] We propose a novel parameterization of the unknown disturbance data considering consistency with the measured input-output data and the underlying system class (Proposition~\ref{prop:consistency_solutions}).
    \item[C2:] We present the first stochastic DPC scheme for disturbed LTI systems that guarantees chance constraint satisfaction and recursive feasibility without disturbance measurements (Algorithm~\ref{alg:controller}).
    \item[C3:] We provide a guarantee of RASiE with pre-specified confidence for the closed-loop system that explicitly considers the error resulting from SAA of the cost function (Theorem~\ref{th:rasie}).
\end{itemize}

This paper is an extended version of \cite{teutsch2025sampling} with additional material and discussions.

\subsubsection*{Organization}
We introduce the problem setup and discuss data-driven system representations in Section~\ref{sec:preliminaries}, and derive a parameterization of the unknown disturbance data in Section~\ref{sec:distdata}. The proposed controller is presented in Section~\ref{sec:method}, while its control-theoretic properties are discussed in Section~\ref{sec:properties}. Section~\ref{sec:eval} provides a numerical evaluation before we conclude the work in Section~\ref{sec:conclusion}.

\subsubsection*{Notation}
We write $\bm{0}$ for any zero matrix or vector and $\bm{I}_n$ for the identity matrix of order $n$. With $\bm{1}_n \in \mathbb{R}^n$, we denote a column-vector of all ones. We abbreviate the set of integers $\left\{a,\,\ldots,\, b\right\}$ by $\mathbb{N}_a^b$. The Moore-Penrose pseudo-rightinverse of a matrix $\bm{S}$ is defined as $\bm{S}^{\dagger} \coloneqq \bm{S}^{\top}\left(\bm{S}\bm{S}^{\top}\right)^{-1}$. The probability measure and the expectation operator are denoted as $\Pr{\cdot}$ and $\E{\cdot}$, respectively. The matrix $\left[\bm{S}\right]_{[a:b]}$ consists of all rows from the $a$-th row to the $b$-th row of the matrix $\bm{S}$, whereas $\left[\bm{S}\right]_{[a]}$ denotes the $a$-th row/element of the matrix/vector $\bm{S}$. The symbol ``$\otimes$" denotes the Kronecker product and ``$\le$" is applied element-wise. By $\col{\bm{s}_a,\,\ldots,\,\bm{s}_b} \coloneqq \mat{\bm{s}^{\top}_a\,\cdots\,\bm{s}^{\top}_b}^{\top}$, we denote the result from stacking the vectors/matrices $\bm{s}_a,\,\ldots,\,\bm{s}_b$. For any sequence of vectors $\mathcal{S}_T = \left\{\bm{s}_i\right\}_{i=1}^{T}$, $T\in\mathbb{N}$, the corresponding Hankel matrix $\hankel{L}{\mathcal{S}_T}$ of order $L \le T$ is
\begin{equation*}
\hankel{L}{\mathcal{S}_T} \coloneqq \mat{\bm{s}_1 & \bm{s}_2 & \cdots & \bm{s}_{T-L+1} \\ 
	\bm{s}_2 & \bm{s}_3 & \cdots & \bm{s}_{T-L+2} \\
	\vdots & \vdots & \ddots & \vdots \\
	\bm{s}_{L} & \bm{s}_{L+1} & \cdots & \bm{s}_{T}}.
\end{equation*}
For a positive definite matrix $\bm{S}$, we define the weighted 2-norm of the vector $\bm{s}$ as $\norm{\bm{s}}_{\bm{S}} \coloneqq \sqrt{\bm{s}^{\top} \bm{S} \bm{s}}$, and $\norm{\bm{s}} = \norm{\bm{s}}_{\bm{I}}$ for the Euclidean norm. We write $\bm{y}_{i|k}$ for the predicted output $i$ steps ahead of time-step $k$. For any sets $\mathbb{S}_1,\mathbb{S}_2$, we write the Minkowski set addition as $\mathbb{S}_1 \oplus \mathbb{S}_2=\{\bm{s}_1 + \bm{s}_2 \mid \bm{s}_1 \in \mathbb{S}_1,~ \bm{s}_2 \in \mathbb{S}_2\}$, the Pontryagin set difference as $\mathbb{S}_1 \ominus \mathbb{S}_2 = \{\bm{s}_1 \in \mathbb{S}_1 \mid \bm{s}_1 + \bm{s}_2 \in \mathbb{S}_1 ~\forall  \bm{s}_2 \in \mathbb{S}_2 \}$, and set multiplication as $\bm{K}\mathbb{S}_1 = \{\bm{K} \bm{s} \mid \bm{s} \in \mathbb{S}_1\}$. Positive definiteness of a matrix $\bm{S}$ is denoted by $\bm{S} \succ \bm{0}$, and $\conv{\cdot}$ denotes the convex hull over a set of vertices. We denote the maximum and minimum eigenvalue of a matrix $\bm{S}$ as $\evmax{\bm{S}}$ and $\evmin{\bm{S}}$, respectively. 
A function $\rho: \mathbb{R}_{\ge 0} \rightarrow \mathbb{R}_{\ge 0}$ is of class $\mathscr{K}$ if $\rho$ is continuous, strictly increasing, and $\rho(0) = 0$. If $\rho\in\mathscr{K}$ is unbounded, then $\rho$ is of class $\mathscr{K}_{\infty}$. A function $\beta: \mathbb{R}_{\ge 0} \times \mathbb{R}_{\ge 0} \rightarrow \mathbb{R}_{\ge 0}$ is of class $\mathscr{KL}$ if $\beta(\cdot,\,t) \in \mathscr{K}$ for fixed $t$ and $\beta(r,\,\cdot)$ is continuous, strictly decreasing, and ${\lim\limits_{t \rightarrow \infty}\beta(r,\,t)= 0}$ for fixed $r$.


\section{Problem Setup \& Preliminaries} \label{sec:preliminaries}
In this section, we first introduce the problem setup consisting of the considered system class and relevant assumptions. Then, we present preliminaries on data-driven system representations.

\subsection{Problem Setup}\label{sec:problemsetup}
We consider a discrete-time LTI system $\Sigma$ of order $n$ in AutoRegressive with eXtra input (ARX) form with additive disturbance, i.e.,
\begin{equation}
	\gls{y}_{k} = \bm{\Phi} \gls{xi}_{k} + \bm{\Psi} \gls{u}_{k} + \gls{d}_{k}, \label{eq:system}
\end{equation}
where the system matrices $\bm{\Phi}$, $\bm{\Psi}$ are unknown. System \eqref{eq:system} consists of the output $\gls{y}_{k}\in \mathbb{R}^{\glsd{y}}$, input $\gls{u}_{k} \in \mathbb{R}^{\glsd{u}}$, disturbance $\gls{d}_{k} \in \mathbb{R}^{\glsd{d}}$, and the \textit{extended state} vector of past $\gls{Tini} \in \mathbb{N}$ inputs and outputs
\begin{equation}
    \gls{xi}_{k} \coloneqq \vc{\col{\gls{u}_{k-\gls{Tini}},\,\dots,\,\gls{u}_{k-1}} \\ \col{\gls{y}_{k-\gls{Tini}},\,\dots,\,\gls{y}_{k-1}}} \in \mathbb{R}^{\glsd{xi}},\,\glsd{xi}\coloneqq(\glsd{u}+\glsd{y})\gls{Tini}, \label{eq:extstate}
\end{equation}
with given $\gls{xi}_{0}$. As in \cite{pan2023data}, we rely on the following assumption regarding an equivalent minimal state-space representation of \eqref{eq:system}.
\begin{assumption}[Minimal state-space representation]\label{assum:minss}
    There exists a minimal state-space representation 
    \begin{subequations} \label{eq:system_minimal}
    \begin{align}
    	\gls{x}_{k+1} &= \bm{A} \gls{x}_{k} + \bm{B} \gls{u}_{k} + \bm{E}\gls{d}_{k}, \\
    	\gls{y}_k &= \bm{C} \gls{x}_{k} + \bm{D} \gls{u}_{k} + \gls{d}_{k},
    \end{align}
    \end{subequations}
    with $\bm{x}\in\mathbb{R}^n$, controllable $\left(\bm{A},\,\mat{\bm{B} ~\bm{E}}\right)$, and observable $(\bm{A},\,\bm{C})$ such that for some initial condition $\gls{x}_{0}$, the input-output trajectories of \eqref{eq:system} and \eqref{eq:system_minimal} coincide for all disturbance sequences $\gls{d}_0, \gls{d}_1,\ldots$~.
\end{assumption}
Details on how to construct the system parameters in \eqref{eq:system_minimal} from (known) $\bm{\Phi}$ and $\bm{\Psi}$ in \eqref{eq:system} are given in \cite{pan2023data,sadamoto2022equivalence}. Assumption~\ref{assum:minss} allows us to construct a stabilizable and detectable \cite{bongard2022robust} (but not necessarily minimal) state-space representation of \eqref{eq:system}, i.e.,
\begin{subequations}\label{eq:system_arx_nonminimal}
\begin{align}
    \gls{xi}_{k+1} &= \tilde{\bm{A}} \gls{xi}_{k} + \tilde{\bm{B}} \gls{u}_{k} + \tilde{\bm{E}}\gls{d}_{k},\\
    \gls{y}_{k} &= \bm{\Phi} \gls{xi}_{k} + \bm{\Psi} \gls{u}_{k} + \gls{d}_{k}, 
\end{align}
\end{subequations}
with 
$\tilde{\bm{A}} \coloneqq \col{\bar{\bm{A}},\,\bm{\Phi}}$, $\tilde{\bm{B}} \coloneqq \col{\bar{\bm{B}},\,\bm{\Psi}}$, $\tilde{\bm{E}} \coloneqq \col{\bm{0},\, \bm{I}_{\glsd{y}}}$, where
\begin{equation}
    \bar{\bm{A}} \coloneqq \mat{\bm{0} & \bm{I}_{(\gls{Tini}-1)\glsd{u}} & \bm{0} & \bm{0} \\ \bm{0} & \bm{0} & \bm{0} & \bm{0} \\ \bm{0} & \bm{0} & \bm{0} & \bm{I}_{(\gls{Tini}-1)\glsd{y}} },\, \bar{\bm{B}} \coloneqq \mat{\bm{0}\\ \bm{I}_{\glsd{u}} \\ \bm{0}}. \label{eq:system_arx_nonminimal_helper}
\end{equation}
This equivalent state-space form of \eqref{eq:system} allows for simpler analysis of closed-loop properties, as in \cite{bongard2022robust}, and will be used in Section~\ref{sec:properties}.

Next to the assumptions on the system, we require that the disturbances $\gls{d}_k$ satisfy the following.
\begin{assumption}[Disturbance properties] \label{assum:distset}
    The disturbances $\gls{d}_k$ are the realizations of a zero-mean random variable that is independent and identically distributed (iid) according to a known probability density function $\bm{f}_{\gls{d}}(\cdot)$, supported by a known compact polytopic set
\begin{equation}\label{eq:distset}
    \gls{dset} = \left\{ \gls{d} \in \mathbb{R}^{\glsd{d}} ~\left|~ \bm{G}_d \bm{d} \le \bm{g}_d \right.\right\}.
\end{equation}
\end{assumption}

The problem considered in this work is to control system \eqref{eq:system} subject to probabilistic output constraints and hard input constraints 
\begin{subequations}\label{eq:constraints}
	\begin{align}
	&\Pr{\gls{y}_{k} \in \mathbb{Y}} \ge 1-\eps &\mathbb{Y} = \left\{\bm{y} \in \mathbb{R}^{\glsd{y}}  ~\left|~ \bm{G}_y \,\bm{y} \le \bm{g}_y \right.\right\}, \label{eq:outputcons} \\
	&\hspace{5mm}\gls{u}_{k} \in  \mathbb{U}, &\mathbb{U} = \left\{\bm{u} \in \mathbb{R}^{\glsd{u}} \left|~ \bm{G}_u \bm{u} \le \bm{g}_u \right.\right\}, \label{eq:inputcons}
	\end{align}
\end{subequations}
where $\mathbb{Y}$ and $\mathbb{U}$ are compact sets containing the origin. 
The objective of the predictive controller is to minimize in a receding horizon fashion the expected finite horizon cost
\begin{equation}\label{eq:cost_expected}
    J_{\gls{Tf}} \coloneqq \E{\sum\limits_{l=0}^{\gls{Tf}-1} \left( \norm{\gls{y}_{l|k}}^2_{\bm{Q}} + \norm{\gls{u}_{l|k}}^2_{\bm{R}} \right) + \norm{\gls{xi}_{\gls{Tf}|k}}^2_{\bm{P}}},
\end{equation}
with weights $\bm{Q}$, $\bm{R}$, $\bm{P} \succ \bm{0}$ (positive definite) and prediction horizon $\gls{Tf} \in \mathbb{N}$.
Since the system matrices $\bm{\Phi}$ and $\bm{\Psi}$ in \eqref{eq:system} are unknown in our problem setting, we cannot directly use \eqref{eq:system} for predictions. 
Instead, we base predictions on data, for which we assume access to a PE input-output data trajectory, collected before the control phase (\textit{offline}). Consider the following standard definition.
\begin{definition}[Persistency of excitation \cite{willems2005note}] \label{def:persistency}
	~~A trajectory $\mathcal{S}_T = \left\{\bm{s}_{i}\right\}_{i=1}^{T}$ of length $T\in\mathbb{N}$ with $\bm{s}_i \in \mathbb{R}^{n_s}$ is PE of order $L \le T$ if the Hankel matrix $\hankel{L}{\mathcal{S}_T}$ has full rank~$n_s L$.
\end{definition}
\begin{assumption}[Available data]\label{assum:trajData}
    An input-output data trajectory $\{\gls{ud}_i\}_{i=1-\gls{Tini}}^{T},\,\{\gls{yd}_i\}_{i=1-\gls{Tini}}^{T}$ of system \eqref{eq:system} is available, yielding the data
    $\glsd{ud}_T \coloneqq \{\gls{ud}_i\}_{i=1}^{T}$, $\glsd{yd}_T \coloneqq\{\gls{yd}_i\}_{i=1}^{T}$, and $\glsd{xid}_{T+1} \coloneqq \{\gls{xid}_i\}_{i=1}^{T+1}$ via \eqref{eq:extstate}. The corresponding disturbance data $\glsd{dd}_T \coloneqq \{\gls{dd}_{i}\}_{i=1}^{T}$ are unknown but satisfy Assumption~\ref{assum:distset} and they are such that the trajectory of generalized inputs $\{\col{\gls{ud}_i,\,\gls{dd}_i}\}_{i=1}^{T}$ is PE of order $\glsd{x}+\gls{Tf}+\gls{Tini}$, with system order~$\glsd{x}$ and horizon $\gls{Tf} \in \mathbb{N}$.
\end{assumption}
\begin{remark}
    Verifyability of Assumption~\ref{assum:trajData} is discussed in Appendix~\ref{app:verify}.
    Assumption~\ref{assum:trajData} is not restrictive in practice, since appropriate inputs $\gls{ud}_i$ can be chosen for the offline data collection, and $\gls{dd}_i$ is the realization of an iid random process (see Assumption~\ref{assum:distset}).
\end{remark}

Based on the available data, we aim to solve the following problem.
\begin{problem}\label{problem}
    Given a PE input-output data trajectory as in Assumption~\ref{assum:trajData}, design a computationally efficient output-feedback predictive control scheme for system \eqref{eq:system} subject to Assumption~\ref{assum:minss} that minimizes cost \eqref{eq:cost_expected} in receding horizon fashion while guaranteeing satisfaction of constraints \eqref{eq:constraints} during closed-loop operation. Constraint satisfaction entails the satisfaction of chance constraints~\eqref{eq:outputcons} based on probabilistic knowledge of the disturbance (Assumption~\ref{assum:distset}).
\end{problem}
We address Problem~\ref{problem} by developing a DPC scheme based on a data-driven system representation for model-free predictions. Although we assume that the system matrices in \eqref{eq:system} are fully unknown in general, we describe how prior (partial) model knowledge can be incorporated in the controller design in Appendix~\ref{app:example}.

\subsection{Data-driven System Representation} \label{sec:sysrep}
In their seminal work \cite{willems2005note}, Willems and co-authors have presented a non-parametric representation of LTI systems directly based on input-output data, known as the \textit{fundamental lemma}. 
The following result is an extension of said lemma to systems of the form~\eqref{eq:system}, cf. \cite{pan2021stochastic,kerz2023datadriven}.
\begin{lemma}[Extended fundamental lemma]
\label{lem:extfundamental}
	Consider a controllable LTI system $\Sigma$ of the form \eqref{eq:system_minimal} and measured data trajectories $\glsd{ud}_T$, $\glsd{dd}_T$, $\glsd{yd}_T$, and $\glsd{xid}_{T}$ where $T \ge \gls{Tini} + \gls{Tf}$, $\gls{Tf} \in \mathbb{N}$, and $\gls{Tini} \ge \lag{\Sigma}$.\footnote{$\lag{\Sigma}$ of an LTI system $\Sigma$ of order $n$ is defined as the smallest natural number $j \le n$ for which $\mathcal{O}_j \coloneqq \col{\bm{C}, \bm{C} \bm{A}, \dots, \bm{C} \bm{A}^{j-1}}$ has rank $n$.} 
    If the trajectory of generalized inputs $\{\col{\gls{ud}_i,\, \gls{dd}_i}\}_{i=1-\gls{Tini}}^{T}$ is PE of order $n+\gls{Tini} + \gls{Tf}$, then any length-$\left(\gls{Tini} + \gls{Tf}\right)$ input-disturbance-output trajectory $\left\{\gls{u}_{i}\right\}_{i=k-\gls{Tini}}^{k+\gls{Tf}-1}$, $\left\{\gls{d}_{i}\right\}_{i=k}^{k+\gls{Tf}-1}$, $\left\{\gls{y}_{i}\right\}_{i=k-\gls{Tini}}^{k+\gls{Tf}-1}$ is a valid trajectory of $\Sigma$ for $k \ge 0$ if and only if there exists $\bm{\alpha} \in \mathbb{R}^{T-\gls{Tf}+1}$ such that
    \begin{equation} \label{eq:extfundlemm_reform}
		\vc{\gls{xi}_k \\ \col{\gls{u}_{k},\,\dots,\,\gls{u}_{k+\gls{Tf}-1}}\\ \col{\gls{d}_{k},\,\dots,\,\gls{d}_{k+\gls{Tf}-1}}\\ \col{\gls{y}_{k},\,\dots,\,\gls{y}_{k+\gls{Tf}-1}}} = \mat{\hankel{1}{\glsd{xid}_{T-\gls{Tf}+1}} \\ \hankel{\gls{Tf}}{\glsd{ud}_{T}} \\ \hankel{\gls{Tf}}{\glsd{dd}_{T}} \\ \hankel{\gls{Tf}}{\glsd{yd}_{T}}} \bm{\alpha},
	\end{equation}
    with $\gls{xi}_k$ and $\glsd{xid}_{T-\gls{Tf}+1} = \{\gls{xid}_i\}_{i=1}^{T-\gls{Tf}+1}$ according to \eqref{eq:extstate}.
\end{lemma}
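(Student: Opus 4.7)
My plan is to reduce Lemma~\ref{lem:extfundamental} to a direct application of Lemma~\ref{lem:fundamental} by viewing the disturbance as an auxiliary input. Specifically, I would introduce the generalized input $\bm{w}_k \coloneqq \col{\gls{u}_k,\gls{d}_k} \in \mathbb{R}^{\glsd{u}+\glsd{y}}$, after which the minimal realization \eqref{eq:system_minimal} rewrites as the undisturbed LTI system
\begin{align*}
\gls{x}_{k+1} &= \bm{A}\gls{x}_k + \mat{\bm{B} & \bm{E}}\bm{w}_k, \\
\gls{y}_k &= \bm{C}\gls{x}_k + \mat{\bm{D} & \bm{I}_{\glsd{y}}}\bm{w}_k.
\end{align*}

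Next I would check that the hypotheses of Lemma~\ref{lem:fundamental} transfer to this augmented system. Controllability of $(\bm{A},\mat{\bm{B} & \bm{E}})$ is immediate from Assumption~\ref{assum:minss}, since the augmented controllability matrix contains every column of the controllability matrix of $(\bm{A},\bm{B})$. The pair $(\bm{A},\bm{C})$, and therefore $\lag{\Sigma}$, is unchanged, so $\gls{Tini} \ge \lag{\Sigma}$ continues to hold. Finally, Assumption~\ref{assum:trajData} provides persistency of excitation of $\{\col{\gls{ud}_i,\gls{dd}_i}\}_{i=1}^T$ of order $\glsd{x}+\gls{Tini}+\gls{Tf}$, which is precisely what Lemma~\ref{lem:fundamental} demands of the generalized input data. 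Invoking Lemma~\ref{lem:fundamental} then produces a Hankel characterization of all valid trajectories of the augmented system, and because $\bm{w}_k$ has the block structure $\col{\gls{u}_k,\gls{d}_k}$, its Hankel block separates cleanly into $\hankel{\gls{Tf}}{\glsd{ud}_T}$ and $\hankel{\gls{Tf}}{\glsd{dd}_T}$, matching the form \eqref{eq:extfundlemm_reform}.

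The step I expect to be the main obstacle is reconciling the shape of the extended state. A naive application of Lemma~\ref{lem:fundamental} to the augmented system would endow the initial-condition block with past values of both $\gls{u}$ and $\gls{d}$, whereas Lemma~\ref{lem:extfundamental} uses $\gls{xi}_k$ containing past $\gls{u}$ and past $\gls{y}$ only. To close this gap, I would appeal to the equivalent non-minimal ARX realization \eqref{eq:system_arx_nonminimal}: when $\gls{Tini} \ge \lag{\Sigma}$, the vector $\gls{xi}_k$ is itself a (non-minimal) state of $\Sigma$, so future outputs are uniquely determined by $\gls{xi}_k$ together with future $\gls{u}$ and future $\gls{d}$, with no need for past disturbance samples. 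In Hankel-matrix language, the past-disturbance rows that Lemma~\ref{lem:fundamental} would otherwise prepend to the extended-state block are redundant given the past-output rows already embedded in $\hankel{1}{\glsd{xid}_{T-\gls{Tf}+1}}$, and can thus be projected out without shrinking the column span. Sufficiency of \eqref{eq:extfundlemm_reform} then follows because each column of the Hankel stack is, by construction from measured data, an actual trajectory of $\Sigma$, and linear combinations preserve validity by LTI superposition; necessity follows from the full-row-rank property of the remaining Hankel stack granted by the PE assumption.
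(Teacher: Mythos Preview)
Your approach matches the paper's: the paper does not prove Lemma~\ref{lem:extfundamental} in detail but introduces it with the remark that ``by treating the disturbance as an additional input to the system, Lemma~\ref{lem:fundamental} extends,'' citing \cite{pan2021stochastic,kerz2023datadriven} for the details. Your identification and resolution of the extended-state mismatch via the ARX realization~\eqref{eq:system_arx_nonminimal} is a sensible elaboration of that sketch; the one refinement I would suggest is to argue necessity by projection---any $\bm{\alpha}$ that matches the augmented Hankel including past-$\gls{d}$ rows automatically matches the stack in \eqref{eq:extfundlemm_reform} after those rows are dropped---rather than by a full-row-rank claim on the reduced stack, since the $\gls{xi}$-block need not have full row rank when the extended state is non-minimal.
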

\begin{proof}
    The result follows directly from \cite[Lemma~1]{pan2021stochastic} by reordering the rows of \eqref{eq:extfundlemm_reform} and, due to the structure of \eqref{eq:system}, neglecting equations that involve the past disturbances $\gls{d}_{k-\gls{Tini}},\,\dots,\,\gls{d}_{k-1}$.
\end{proof}
Lemma~\ref{lem:extfundamental} lays the foundation for describing system behavior without model-knowledge in this work: equation \eqref{eq:extfundlemm_reform} functions as a non-parametric representation of system \eqref{eq:system_minimal}, and thus allows for the formulation of a data-driven OCP where \eqref{eq:extfundlemm_reform} replaces the prediction model and $\bm{\alpha}$ acts as a decision variable. Note that the extended state $\gls{xi}_k$ on the left-hand-side in \eqref{eq:extfundlemm_reform} implicitly fixes the initial state of the system for uniquely determined predictions since it entails the past $\gls{Tini}$ inputs and outputs $\left\{\gls{u}_{i}\right\}_{i=k-\gls{Tini}}^{k-1}$, $\left\{\gls{y}_{i}\right\}_{i=k-\gls{Tini}}^{k-1}$~\cite{markovsky2008}. 

Crucially, the system representation \eqref{eq:extfundlemm_reform} relies on the availability of disturbance data $\glsd{dd}_{T}$. Equivalent versions of Lemma~\ref{lem:extfundamental} are exploited in recent works for stochastic DPC, e.g., \cite{pan2023data,kerz2023datadriven}, where ${\glsd{dd}_{T}}$ is assumed to be known.
In this work, ${\glsd{dd}_{T}}$ is unknown as only input-output data is available (Assumption~\ref{assum:trajData}). In such a case, $\glsd{dd}_{T}$ may be estimated from inputs and outputs \cite{pan2021stochastic}. However, if the estimates are not exact, the guarantees of Lemma~\ref{lem:extfundamental} are lost: the right-hand side of \eqref{eq:extfundlemm_reform} might produce trajectories that are not realizable by the system.
In the next section, we address the issue of unknown disturbance data $\glsd{dd}_{T}$ by presenting an explicit parameterization considering consistency with the given input-output data and system class~\eqref{eq:system}.
\section{Consistency of Disturbance Data} \label{sec:distdata}

In this section, we will first derive an explicit parameterization of the unknown disturbance data $\glsd{dd}_T$ considering consistency with the given input-output data $\glsd{ud}_T$, $\glsd{yd}_T$. 
Then, we will derive a set of consistent disturbance data that can be further utilized for sampling.

\subsection{Consistent Disturbance Data}
With data from Assumption~\ref{assum:trajData}, let us consider the matrices
\begin{align*}
    \hankel{1}{\glsd{xid}_{T}} = \mat{\gls{xid}_1 & \cdots & \gls{xid}_T},\, \hankel{1}{\glsd{yd}_{T}} = \mat{\gls{yd}_1 & \cdots & \gls{yd}_T}, \\
    \hankel{1}{\glsd{ud}_{T}} = \mat{\gls{ud}_1 & \cdots & \gls{ud}_T},\, \hankel{1}{\glsd{dd}_{T}} = \mat{\gls{dd}_1 & \cdots & \gls{dd}_T}.
\end{align*}
As the given input-output data $\glsd{ud}_{T}$, $\glsd{yd}_{T}$ and unknown disturbance data $\glsd{dd}_{T}$ stem from system \eqref{eq:system}, the above data matrices must satisfy
\begin{equation}
	\hankel{1}{\glsd{yd}_{T}} = \bm{\Phi} \hankel{1}{\glsd{xid}_{T}} + \bm{\Psi} \hankel{1}{\glsd{ud}_{T}} + \hankel{1}{\glsd{dd}_{T}}. \label{eq:dynamics_data}
\end{equation}
Equation \eqref{eq:dynamics_data} allows for the definition of a constraint on the disturbance data $\glsd{dd}_{T}$ that guarantees consistency with the given input-output data $\glsd{ud}_{T}$, $\glsd{yd}_{T}$ and the underlying system class \eqref{eq:system}, see \cite{pan2021stochastic,berberich2020robustfb,alanwar2023data}. 
The following result summarizes \cite[Prop.~2~\&~Cor.~3]{pan2021stochastic}.
\begin{proposition}[Consistency constraint] \label{prop:consistency}
    Consider data $\glsd{ud}_{T}$, $\glsd{dd}_{T}$, $\glsd{yd}_{T}$ of system \eqref{eq:system} satisfying Assumption~\ref{assum:trajData} with $\glsd{dd}_{T}$ unknown. Then,
    \begin{equation}
        \left(\hankel{1}{\glsd{yd}_{T}} - \hankel{1}{\glsd{dd}_{T}}\right) \bm{\Pi}_S = \bm{0} \label{eq:consistency}
    \end{equation}
    holds, with $\bm{\Pi}_S \coloneqq \bm{I}_{T} - \bm{S}^{\dagger} \bm{S}$ and $\bm{S} \coloneqq \col{\hankel{1}{\glsd{xid}_{T}},\,\hankel{1}{\glsd{ud}_{T}}}$.

    Furthermore, any $\glsd{dd}_{T}$ satisfying \eqref{eq:consistency} implicitly determines parameters $\bm{\Phi}$, $\bm{\Psi}$ of an LTI system \eqref{eq:system} such that \eqref{eq:dynamics_data} is satisfied, i.e., 
\begin{equation}
    \mat{\bm{\Phi} & \bm{\Psi}} = \left(\hankel{1}{\glsd{yd}_{T}} - \hankel{1}{\glsd{dd}_{T}}\right) \bm{S}^{\dagger}. \label{eq:connection_sysparams_data}
\end{equation}
\end{proposition}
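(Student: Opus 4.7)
The plan is to read off both claims directly from the data-generating identity \eqref{eq:dynamics_data} combined with the right-pseudoinverse identity $\bm{S}\bm{S}^{\dagger} = \bm{I}$, which is well-defined precisely because Assumption~\ref{assum:rank_of_S} guarantees that $\bm{S}\bm{S}^{\top}$ is invertible.

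For the first claim (consistency is necessary), I would rewrite \eqref{eq:dynamics_data} as
\begin{equation*}
    \hankel{1}{\glsd{yd}_{T}} - \hankel{1}{\glsd{dd}_{T}} = \mat{\bm{\Phi} & \bm{\Psi}} \bm{S},
\end{equation*}
then right-multiply by $\bm{\Pi}_S = \bm{I}_T - \bm{S}^{\dagger}\bm{S}$. Using $\bm{S}\bm{S}^{\dagger} = \bm{I}$ (and hence $\bm{S}\bm{\Pi}_S = \bm{S} - \bm{S}\bm{S}^{\dagger}\bm{S} = \bm{0}$), the product vanishes, giving \eqref{eq:consistency}. This step is essentially a one-liner; Assumption~\ref{assum:rank_of_S} is the only non-trivial ingredient.

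For the second claim (consistency is sufficient to determine $\bm{\Phi}, \bm{\Psi}$), I would take an arbitrary $\glsd{dd}_{T}$ satisfying \eqref{eq:consistency} and rearrange to obtain
\begin{equation*}
    \hankel{1}{\glsd{yd}_{T}} - \hankel{1}{\glsd{dd}_{T}} = \bigl(\hankel{1}{\glsd{yd}_{T}} - \hankel{1}{\glsd{dd}_{T}}\bigr)\bm{S}^{\dagger}\bm{S}.
\end{equation*}
Defining $\mat{\bm{\Phi} & \bm{\Psi}}$ by the expression \eqref{eq:connection_sysparams_data}, this immediately yields $\hankel{1}{\glsd{yd}_{T}} - \hankel{1}{\glsd{dd}_{T}} = \mat{\bm{\Phi} & \bm{\Psi}}\bm{S}$, which expanded column-wise is exactly \eqref{eq:dynamics_data}, showing that the chosen disturbance data is consistent with an LTI realization of the form \eqref{eq:system} with the identified parameters.

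The main obstacle is really only making the pseudoinverse argument rigorous: one must invoke Assumption~\ref{assum:rank_of_S} to justify that $\bm{S}^{\dagger} = \bm{S}^{\top}(\bm{S}\bm{S}^{\top})^{-1}$ exists and satisfies $\bm{S}\bm{S}^{\dagger} = \bm{I}_{\glsd{xi}+\glsd{u}}$. Beyond that, both directions follow by algebraic manipulation of a single linear identity, so no further machinery is needed.
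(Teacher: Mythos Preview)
Your proposal is correct and is exactly the natural argument for this result. Note that the paper does not actually supply a proof of Proposition~\ref{prop:consistency}; it is stated with a citation to \cite{pan2021stochastic}, and your derivation matches the standard argument one finds there: rewrite \eqref{eq:dynamics_data} as $\hankel{1}{\glsd{yd}_{T}}-\hankel{1}{\glsd{dd}_{T}}=[\bm{\Phi}\;\bm{\Psi}]\bm{S}$, use $\bm{S}\bm{\Pi}_S=\bm{0}$ (from full row rank of $\bm{S}$) for the forward direction, and use $\bm{\Pi}_S=\bm{I}_T-\bm{S}^{\dagger}\bm{S}$ together with the definition \eqref{eq:connection_sysparams_data} for the converse.
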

\begin{remark} \label{rem:rank}
Proposition~\ref{prop:consistency} requires full row-rank of the data matrix $\bm{S}$. This is a mild requirement since the input-output trajectory is randomly perturbed by disturbances at each time-step (see Appendix~\ref{app:verify}).
In a disturbance-free setting (or for small disturbance levels) and (nearly) rank-deficient $\bm{S}$, full row-rank of $\bm{S}$ can be recovered by an alternative definition of the extended state \cite{alsalti2023notes}.
\end{remark}
\begin{definition}[Consistent disturbance data trajectories]
Given input-output data as in Assumption~\ref{assum:trajData}, any disturbance trajectory $\glsd{dd}_{T}$ is called \emph{consistent} if and only if it satisfies~\eqref{eq:consistency}.
\end{definition}

Under the given assumptions, \eqref{eq:consistency} admits infinitely many solutions $\glsd{dd}_{T}$. The following proposition provides an explicit parameterization of these solutions in terms of $p(\glsd{xi}+\glsd{u})$ free parameters, namely $\glsd{xi}+\glsd{u}$ disturbances $\glsd{dd}_{\glsd{xi}+\glsd{u}}$ of the full disturbance data trajectory $\glsd{dd}_T$, specified via a column selection matrix $\bm{\Omega}\in \mathbb{R}^{T \times (\glsd{xi} + \glsd{u})}$.

\begin{proposition}[Consistency parameterization] \label{prop:consistency_solutions}
   Consider an input-output data trajectory $\glsd{ud}_{T}$, $\glsd{yd}_{T}$ of system \eqref{eq:system} satisfying Assumptions~\ref{assum:trajData} with $\glsd{dd}_{T}$ unknown. Let $\bm{\Omega}\in \mathbb{R}^{T \times (\glsd{xi} + \glsd{u})}$ be a column selection matrix that renders $\bm{S}\bm{\Omega}$ invertible and selects disturbances $\glsd{dd}_{\glsd{xi}+\glsd{u}}$ from $\glsd{dd}_T$ such that $\hankel{1}{\glsd{dd}_{\glsd{xi}+\glsd{u}}}=\hankel{1}{\glsd{dd}_{T}}\bm{\Omega}$. A candidate disturbance trajectory $\glsd{dd}_T$ is consistent if and only if it satisfies
    \begin{equation}
        \hankel{1}{\glsd{dd}_{T}} = \bm{\Gamma}_1 + \hankel{1}{\glsd{dd}_{\glsd{xi}+\glsd{u}}}\bm{\Gamma}_2 , \label{eq:consistency_solutions}
    \end{equation}
    where $\bm{\Gamma}_1$ and $\bm{\Gamma}_2$ are matrices computed from data as
    \begin{align} \label{eq:consistency_params}
        \bm{\Gamma}_1 &= \hankel{1}{\glsd{yd}_{T}} \bm{\Pi}_S\left(\bm{I}_T - \bm{\Omega}\bm{\Gamma}_2\right), & \bm{\Gamma}_2 &= \left(\bm{S} \bm{\Omega}\right)^{-1} \bm{S}.
    \end{align}
\end{proposition}
\begin{proof}
    Since $\bm{S} = \col{\hankel{1}{\glsd{xid}_{T}},\,\hankel{1}{\glsd{ud}_{T}}}$ has full row-rank $\glsd{xi}+\glsd{u}$ (see Remark~\ref{rem:rank}) and the rows of $\bm{S}$ span the null space of $\bm{\Pi}_S$, the solutions $\glsd{dd}_{T}$ of \eqref{eq:consistency} can be parameterized as
    \begin{equation}
        \hankel{1}{\glsd{dd}_{T}} = \hankel{1}{\glsd{yd}_{T}} \bm{\Pi}_S + \bm{\Delta} \bm{S}, 
    \label{eq:consistency_solutions_general}
    \end{equation}
    with the matrix $\bm{\Delta} \in \mathbb{R}^{\glsd{y} \times \left(\glsd{xi}+\glsd{u}\right)}$ containing the free parameters.
    We now want to express $\bm{\Delta}$ in terms of $\glsd{xi}+\glsd{u}$ disturbances from $\glsd{dd}_T$. Due to full row-rank of $\bm{S}$, there exists a column selection matrix $\bm{\Omega}$ such that $\bm{S} \bm{\Omega}$ is invertible. Thus, consider the $\glsd{xi}+\glsd{u}$ columns of \eqref{eq:consistency_solutions_general} according to $\bm{\Omega}$.
    With $\bm{S} \bm{\Omega}$ invertible, solving \eqref{eq:consistency_solutions_general} for $\bm{\Delta}$ yields
    \begin{equation}
        \bm{\Delta} = \left(\hankel{1}{\glsd{dd}_{\glsd{xi}+\glsd{u}}} -  \hankel{1}{\glsd{yd}_{T}} \bm{\Pi}_S \bm{\Omega}\right) \left(\bm{S} \bm{\Omega}\right)^{-1}. \label{eq:consistency_solution_delta}
    \end{equation}
    Finally, by substituting \eqref{eq:consistency_solution_delta} back into \eqref{eq:consistency_solutions_general}, we retrieve \eqref{eq:consistency_solutions} with the data-based parameters from \eqref{eq:consistency_params}.
\end{proof}

Under the consistency constraint \eqref{eq:consistency}, fixing $\glsd{xi}+\glsd{u}$ disturbances uniquely specifies the whole disturbance data trajectory of length~$T$, and thus implicitly specifying corresponding system parameters via \eqref{eq:connection_sysparams_data}. Equation \eqref{eq:consistency_solutions} thus offers a parameterization of all consistent disturbance data trajectories $\glsd{dd}_T$ based on $\glsd{xi}+\glsd{u}$ of the $T$ disturbances.
In the next section, we further restrict the choice of disturbances $\glsd{dd}_{\glsd{xi}+\glsd{u}}$ by including the disturbance bound \eqref{eq:distset} of Assumption \ref{assum:distset}.
\subsection{Set of Consistent Disturbance Data} \label{sec:distset_consistency}
Given the disturbance bound \eqref{eq:distset} in Assumption~\ref{assum:distset}, we are interested in all consistent disturbance trajectories $\glsd{dd}_{T}$ that are admissible, i.e., 
\begin{equation}
    \bm{G}_d \hankel{1}{\glsd{dd}_{T}} \le \bm{1}_{T}^{\top} \otimes \bm{g}_d\label{eq:distset_data_all}.
\end{equation}
By exploiting \eqref{eq:consistency_solutions}, we can express \eqref{eq:distset_data_all} in terms of $\glsd{dd}_{\glsd{xi}+\glsd{u}}$ as
\begin{align}\label{eq:distset_consistency}
    \gls{dset}^{\mathrm{c}} =
    \left\{ \glsd{dd}_{\glsd{xi}+\glsd{u}} \left| 
      \bm{G}_d \hankel{1}{\glsd{dd}_{\glsd{xi}+\glsd{u}}}\bm{\Gamma}_2 \le \bm{G}_d^{\mathrm{c}}  \right.\right\},
\end{align}
with $\bm{G}_d^{\mathrm{c}} \coloneqq \bm{1}_{T}^{\top} \otimes \bm{g}_d - \bm{G}_d\bm{\Gamma}_1$.
If $\glsd{dd}_{\glsd{xi}+\glsd{u}}\in \gls{dset}^{\mathrm{c}}$, then the associated disturbance trajectory $\glsd{dd}_{T}$ \eqref{eq:consistency_solutions} is consistent and satisfies the bounds \eqref{eq:distset}.
Since $\gls{dset}^{\mathrm{c}}$ is a polytopic set and inherits compactness from \eqref{eq:distset}, it can be described in terms of its $N_{\mathrm{v}}$ vertices as $\gls{dset}^{\mathrm{c}} = \conv{\{\glsd{dd}_{\glsd{xi}+\glsd{u},\,j}\}_{j=1}^{N_{\mathrm{v}}}}$. Thus, by exploiting \eqref{eq:connection_sysparams_data} and \eqref{eq:consistency_solutions}, we retrieve the corresponding set of consistent system matrices, i.e., 
\begin{equation} \label{eq:setsysmat}
    \mathbb{A} \coloneqq \conv{\left\{\mat{\bm{\Phi}_j & \bm{\Psi}_j}\right\}_{j=1}^{N_{\mathrm{v}}} },
\end{equation}
with the matrix vertices $\mat{\bm{\Phi}_j~~ \bm{\Psi}_j}$.

The set \eqref{eq:setsysmat} of system matrices that are consistent with the available data is the key focus of the data informativity framework \cite{van2023informativity}, set membership identification \cite{milanese1991smi}, and related approaches on robust data-driven controller design \cite{berberich2020robustfb,alanwar2023data}. In contrast to these related works, our focus on the set \eqref{eq:distset_consistency} of consistent disturbance data allows for the construction of data-driven predictors directly based on disturbance data samples, which can be generated by leveraging the available distributional knowledge. Nevertheless, in Section~\ref{sec:properties}, we exploit the connection between consistent disturbance data and consistent system parameters for controller design: the vertices of \eqref{eq:setsysmat} allow for the formulation of a robust constraint on the first predicted step and stabilizing terminal ingredients as common in literature.

Lastly, we remark that the parameterization \eqref{eq:consistency_params} naturally allows for incorporation of potentially available partial model knowledge into the set \eqref{eq:distset_consistency} of consistent disturbance data (and, thus, into controller design) via \eqref{eq:connection_sysparams_data}. This is detailed in Appendix~\ref{app:example}, together with an illustrative example of the findings of this section. In the following section, we detail the design of the proposed DPC scheme based on samples from the set of consistent disturbance data. 

\section{Sampling-based Stochastic DPC} \label{sec:method}
In this section, we elaborate on the design steps of the proposed DPC scheme.
First, we formulate a conceptual OCP that acts as the basis of the proposed DPC scheme. For tractability in the presence of disturbances, we decompose the input into a pre-stabilizing extended state feedback term with gain $\bm{K}$ and a correction term $\gls{v}_k$, i.e., 
\begin{equation} \label{eq:inputdecomp}
    \gls{u}_k = \bm{K} \gls{xi}_k + \gls{v}_k,
\end{equation}
where only the latter is determined by the predictive controller. The gain $\bm{K}$ can be determined purely from data, see \cite{berberich2020data,van2023informativity,de2019formulas}, or Appendix~\ref{app:stabilizingIngredients}. In order to reflect this change of inputs, we construct the data $\glsd{vd}_{T} \coloneqq \{\gls{vd}_{i}\}_{i=1}^{T}$, $\gls{vd}_{i} = \gls{ud}_i - \bm{K} \gls{xid}_i$. Based on Lemma~\ref{lem:extfundamental}, the conceptual OCP associated with the proposed DPC scheme is
\begin{subequations} \label{eq:ocp_original}
		\begin{align}
		& \underset{\bm{\alpha}}{\mathrm{minimize}} ~~~ J_{\gls{Tf}}\left(\gls{u}_{\mathrm{f},k},\,\gls{y}_{\mathrm{f},k}\right) &\label{eq:ocp_original_cost}\\		
		\mathrm{s.t.}~~ &  \vc{\gls{xi}_k \\ \gls{v}_{\mathrm{f},k}\\ \gls{d}_{\mathrm{f},k}\\ \gls{y}_{\mathrm{f},k}} = \mat{\hankel{1}{\glsd{xid}_{T-\gls{Tf}+1}} \\ \hankel{\gls{Tf}}{\glsd{vd}_{T}} \\ \hankel{\gls{Tf}}{\glsd{dd}_{T}} \\ \hankel{\gls{Tf}}{\glsd{yd}_{T}}} \bm{\alpha}, &\label{eq:ocp_original_predictor} \\
		& \Pr{\gls{y}_{l|k} \in \mathbb{Y}} \ge 1-\eps &\hspace{-10mm}\forall \,l \in \mathbb{N}_0^{\gls{Tf}-1}, \label{eq:ocp_original_chancecons} \\
		& \gls{u}_{l|k} = \gls{v}_{l|k} + \bm{K} \gls{xi}_{l|k} \in \mathbb{U} &\hspace{-10mm}\forall \,l \in \mathbb{N}_0^{\gls{Tf}-1},\label{eq:ocp_original_inputcons}\\
		& \gls{xi}_{\gls{Tf}|k} \in \mathbb{X}_{\gls{Tf}},& \label{eq:ocp_original_termcons}
		\end{align}
\end{subequations}
with the vectors $\gls{v}_{\mathrm{f},k} \coloneqq \col{\gls{v}_{0|k},\,\dots,\,\gls{v}_{\gls{Tf}-1|k}}$, $\gls{u}_{\mathrm{f},k} \coloneqq \col{\gls{u}_{0|k},\,\dots,\,\gls{u}_{\gls{Tf}-1|k}}$, $\gls{d}_{\mathrm{f},k} \coloneqq \col{\gls{d}_{k},\,\dots,\,\gls{d}_{k+\gls{Tf}-1}}$, $\gls{y}_{\mathrm{f},k} \coloneqq \col{\gls{y}_{0|k},\,\dots,\,\gls{y}_{\gls{Tf}-1|k}}$, the predicted extended state~$\gls{xi}_{l|k}$ constructed via \eqref{eq:extstate}, and a suitable robust positive invariant (RPI), polytopic terminal constraint set $\mathbb{X}_{\gls{Tf}}$ designed for stability and constructed form data (see Section~\ref{sec:properties} and Appendix~\ref{app:stabilizingIngredients}).
\begin{remark}
    It is implicitly assumed that the data of correction inputs $\glsd{vd}_{T}$ satisfies the PE condition in Lemma~\ref{lem:extfundamental}. If that is not the case for the available data, the input decomposition \eqref{eq:inputdecomp} can directly be considered in the data collection to ensure PE data, e.g., as in \cite{kloppelt2022novel}.
\end{remark}
OCP \eqref{eq:ocp_original} is intractable due to the uncertainty $\bm{w} \coloneqq \left\{\glsd{dd}_T,\,\gls{d}_{\mathrm{f},k}\right\}$, consisting of the unknown disturbance data $\glsd{dd}_{T}$ and future disturbances $\gls{d}_{\mathrm{f},k}$.
We overcome this issue by deterministically approximating OCP~\eqref{eq:ocp_original} using \textit{samples} of the disturbance data $\glsd{dd}_{T}$ and the future disturbances~$\gls{d}_{\mathrm{f},k}$ (from the set of consistent disturbance data \eqref{eq:distset_consistency} and from the disturbance set \eqref{eq:distset}, respectively). Samples are drawn by leveraging Assumption~\ref{assum:distset}; see discussion in Section~\ref{sec:alg}.
\subsection{Data-driven Sample-based Predictions}
First, we derive predictors for future outputs, inputs, and terminal extended state based on the available system data and samples of the uncertainty $\bm{w}$. By Lemma~\ref{lem:extfundamental}, there exists an $\bm{\alpha}(\bm{w})$ that satisfies~\eqref{eq:ocp_original_predictor} with fixed initial condition $\gls{xi}_k$ and sequence of correction inputs $\gls{v}_{\mathrm{f},k}$.
In fact, we can parameterize all solutions for $\bm{\alpha}(\bm{w})$ as
\begin{equation} \label{eq:samplepred_alpha}
    \bm{\alpha}\left(\bm{w}\right) = \mat{\hankel{1}{\glsd{xid}_{\tilde{T}}} \\ \hankel{\gls{Tf}}{\glsd{vd}_{T}} \\ \hankel{\gls{Tf}}{\glsd{dd}_{T}}}^{\dagger} \vc{\gls{xi}_k \\ \gls{v}_{\mathrm{f},k}\\ \gls{d}_{\mathrm{f},k}} + \bm{\Pi}^{0}_{\alpha}\left(\glsd{dd}_{T}\right) \tilde{\bm{\alpha}},
\end{equation}
with $\tilde{T} = T-\gls{Tf}+1$, free variable $\tilde{\bm{\alpha}} \in \mathbb{R}^{\tilde{T}-\glsd{xi}-(\glsd{u}+\glsd{d})\gls{Tf}}$ and some $\bm{\Pi}^0_{\alpha}\left(\glsd{dd}_{T}\right)$ whose columns span the null space of the data matrix $\col{\hankel{1}{\glsd{xid}_{\tilde{T}}},\, \hankel{\gls{Tf}}{\glsd{vd}_{T}},\,\hankel{\gls{Tf}}{\glsd{dd}_{T}}}$.
By denoting the vector of deterministic decision variables as $\gls{var}_k \coloneqq \col{\gls{xi}_k,\,\gls{v}_{\mathrm{f},k},\,\tilde{\bm{\alpha}}} \in \mathbb{R}^{\glsd{var}}$, we can rewrite \eqref{eq:samplepred_alpha} into the form $\bm{\alpha}(\bm{w}) = \bm{M}(\bm{w}) \gls{var} + \bm{m}(\bm{w})$ with
\begin{subequations} \label{eq:samplepred_alpha_reform}
\begin{align}
    \bm{M}(\bm{w}) =& \mat{\left(\mat{\hankel{1}{\glsd{xid}_{\tilde{T}}} \\ \hankel{\gls{Tf}}{\glsd{vd}_{T}} } \bm{\Pi}_{d}(\bm{w}) \right)^{\dagger} & \bm{\Pi}^0_{\alpha}(\glsd{dd}_T)} \\
    \bm{m}(\bm{w}) =& \left(\hankel{\gls{Tf}}{\glsd{dd}_{T}} \bm{\Pi}_{\xi,v} \right)^{\dagger} \gls{d}_{\mathrm{f},k},
\end{align}
\end{subequations}
where $\bm{\Pi}_{\xi,v}$, $\bm{\Pi}_{d}$ are projectors onto the respective null spaces as
\begin{subequations} \label{eq:datamat_helper}
    \begin{align}
        \bm{\Pi}_{\xi,v} &\coloneqq \bm{I}_{\tilde{T}} - \mat{\hankel{1}{\glsd{xid}_{\tilde{T}}} \\ \hankel{\gls{Tf}}{\glsd{vd}_{T}}}^{\dagger} \mat{\hankel{1}{\glsd{xid}_{\tilde{T}}} \\ \hankel{\gls{Tf}}{\glsd{vd}_{T}}},\\
        \bm{\Pi}_{d}(\bm{w}) &\coloneqq \bm{I}_{\tilde{T}} - \hankel{\gls{Tf}}{\glsd{dd}_{T}}^{\dagger} \hankel{\gls{Tf}}{\glsd{dd}_{T}}.
    \end{align}
\end{subequations}

By applying $\bm{\alpha}(\bm{w})$ to \eqref{eq:ocp_original_predictor} and by considering the decomposition \eqref{eq:inputdecomp}, we obtain data-driven predictors for the future outputs, inputs, and terminal extended state depending on the uncertainty $\bm{w}$ as
\begin{subequations} \label{eq:predictors} 
    \begin{align} 
        \gls{y}_{\mathrm{f},k} = \bm{M}_y\left(\bm{w}\right) \gls{var}_k + \bm{m}_y\left(\bm{w}\right), \label{eq:samplepred_output} \\
        \gls{u}_{\mathrm{f},k} = \bm{M}_u\left(\bm{w}\right) \gls{var}_k + \bm{m}_u\left(\bm{w}\right), 
            \label{eq:samplepred_input} \\         
        \gls{xi}_{\gls{Tf}|k} = \bm{M}_{\xi}\left(\bm{w}\right) \gls{var}_k + \bm{m}_{\xi}\left(\bm{w}\right), \label{eq:samplepred_termstate}
    \end{align}
\end{subequations}
with the data-driven predictor parameters
\begin{subequations} 
    \begin{align} 
        \mat{\bm{M}_y ~~ \bm{m}_y} &\coloneqq \hankel{\gls{Tf}}{\glsd{yd}_{T}} \mat{\bm{M}(\bm{w}) ~~ \bm{m}(\bm{w})},\\
        \mat{\bm{M}_u ~~ \bm{m}_u} &\coloneqq \hankel{\gls{Tf}}{\glsd{ud}_{T}} \mat{\bm{M}(\bm{w}) ~~ \bm{m}(\bm{w})},\\
        \mat{\bm{M}_{\xi} ~~ \bm{m}_{\xi}} &\coloneqq \mat{\gls{xid}_{\gls{Tf}+1} \, \cdots \, \gls{xid}_{T+1}} \mat{\bm{M}(\bm{w}) ~~ \bm{m}(\bm{w})}.
    \end{align} 
\end{subequations}

Evidently, the predictors \eqref{eq:predictors} depend on the uncertainty realization $\bm{w}$ and can thus not be applied directly. However, by employing uncertainty samples $\bm{w}^{(i)} \coloneqq \big\{\glsd{dd}^{(i)}_{T},\,\gls{d}^{(i)}_{\mathrm{f},k}\big\}$, $i \in \mathbb{N}_1^{N_{\mathrm{s}}}$, we obtain deterministic predictions $\gls{y}^{(i)}_{\mathrm{f},k}$, $\gls{u}^{(i)}_{\mathrm{f},k}$, and $\gls{xi}^{(i)}_{\gls{Tf}|k}$. Such sample-based predictions allow for reformulation of the constraints~\eqref{eq:ocp_original_chancecons}\textendash\eqref{eq:ocp_original_termcons} and the cost~\eqref{eq:ocp_original_cost} in terms of the deterministic decision variable $\bm{\zeta}_k$, i.e., the measured extended state $\gls{xi}_k$, input sequence $\gls{v}_{\mathrm{f},k}$, and $\tilde{\bm{\alpha}}$.

\begin{remark} \label{rem:pred}
    When consistent disturbance data samples $\glsd{dd}_T^{(i)}$ are used in \eqref{eq:samplepred_alpha}, the predictors \eqref{eq:predictors} are independent of the free variable $\tilde{\bm{\alpha}}$ as the image of $\bm{\Pi}^{0}_{\alpha}\big(\glsd{dd}^{(i)}_{T}\big)$ is entirely contained in the null space of the data matrix in \eqref{eq:ocp_original_predictor}. In other words, the predicted input-output trajectory for the sampled (consistent) uncertainty is uniquely determined via the extended state $\gls{xi}_k$ and sequence of correction inputs $\gls{v}_{\mathrm{f},k}$. Therefore, one can choose $\tilde{\bm{\alpha}} = \bm{0}$ and thus reduce the vector of decision variables to $\gls{var}_k = \col{\gls{xi}_k,\,\gls{v}_{\mathrm{f},k},\,\bm{0}}$. When using consistent disturbance data samples, the predictors \eqref{eq:predictors} coincide with the predictors commonly used in SPC \cite{favoreel1999spc}, and DPC and SPC yield the same predictor \cite{fiedler2021relationship}. Moreover, the predictors \eqref{eq:predictors} are then equivalent to model-based predictors based on the system matrices that correspond to the disturbance data samples $\glsd{dd}_T^{(i)}$ via \eqref{eq:connection_sysparams_data}.
\end{remark}


\subsection{Constraint Sampling \& Reformulation of Cost Function} \label{sec:conssampling}
In order to render the OCP \eqref{eq:ocp_original} tractable, the chance constraint \eqref{eq:outputcons} needs to be reformulated into a deterministic expression. We reformulate the constraints \eqref{eq:ocp_original_chancecons}\textendash\eqref{eq:ocp_original_termcons} in terms of the previously derived data-driven predictors \eqref{eq:predictors}, and deterministically approximate the chance constraint \eqref{eq:ocp_original_chancecons} via sampling of the uncertainty $\bm{w}$. 
Using \eqref{eq:samplepred_output}, we define the set $\mathbb{Y}^{\mathrm{P}}_l$ of (deterministic) decision variables $\gls{var}_k$ for which the predicted output $\gls{y}_{l|k}$, $l\in\mathbb{N}_0^{\gls{Tf}-1}$, satisfies the chance constraint \eqref{eq:ocp_original_chancecons} with probability of at least $1-\eps$, i.e.,
\begin{equation} \label{eq:ocpchancecons_CSS}
    \mathbb{Y}^{\mathrm{P}}_l \coloneqq \left\{ \gls{var}_k ~\left|~ \Pr{\bm{G}_{y} \gls{y}_{l|k} \le \bm{g}_{y}} \ge 1-\eps \right. \right\}.
\end{equation}
Sets of the form \eqref{eq:ocpchancecons_CSS} are commonly referred to as $\eps$-chance constraint sets ($\eps$-CCS) \cite{mammarella2022chance}.
The goal of the constraint sampling is to determine a deterministic inner-approximation $\mathbb{Y}^{\mathrm{S}}_l$ of the $\eps$-CCS \eqref{eq:ocpchancecons_CSS} by using $N_{\mathrm{s}}$ iid samples $\bm{w}^{(i)} = \{\glsd{dd}^{(i)}_{T},\,\gls{d}^{(i)}_{\mathrm{f},k}\}$, $i \in \mathbb{N}_1^{N_{\mathrm{s}}}$, of the uncertainty. 
Given a single uncertainty sample $\bm{w}^{(i)}$ and corresponding predictor \eqref{eq:samplepred_output}, the sampled set corresponding to the output $\eps$-CSS \eqref{eq:ocpchancecons_CSS} reads
\begin{equation} \label{eq:ocpchancecons_sampled}
    \tilde{\mathbb{Y}}^{\mathrm{S}}_l\big(\bm{w}^{(i)}\big) = \big\{ \gls{var}_k ~\left|~ \bm{G}^{(i)}_{y,l} \gls{var}_k \le \bm{g}^{(i)}_{y,l}  \right. \big\},
\end{equation}
with the sample-based constraint parameters
\begin{subequations}
    \begin{align}
        \bm{G}^{(i)}_{y,l} &\coloneqq \bm{G}_{y} \left[\bm{M}_y\big(\bm{w}^{(i)}\big)\right]_{[l\glsd{y}+1:(l+1)\glsd{y}]},\\ \bm{g}^{(i)}_{y,l} &\coloneqq \bm{g}_{y} - \bm{G}_{y} \left[\bm{m}_y\big(\bm{w}^{(i)}\big)\right]_{[l\glsd{y}+1:(l+1)\glsd{y}]}.
    \end{align}
\end{subequations}
Two popular approaches to the sampling-based approximation of $\eps$-CSSs are the direct sampling-based approximation \cite[Lem.~1]{mammarella2022chance} and the probabilistic scaling approach \cite[Th.~1]{mammarella2022chance}.
Both approaches use sampled sets \eqref{eq:ocpchancecons_sampled} to construct inner approximations of the $\eps$-CCS \eqref{eq:ocpchancecons_CSS}. A concise introduction to these sampling approaches, tailored to the setting of this work, is provided in Appendix~\ref{app:samplingprelims}. Using $N_{\mathrm{s}}$ samples $\bm{w}^{(i)}$, we compute a deterministic constraint set $\mathbb{Y}^{\mathrm{S}}_l$ that inner-approximates $\mathbb{Y}^{\mathrm{P}}_l$ with a pre-defined level of confidence $\conf_y$, such that $\Pr{\mathbb{Y}^{\mathrm{S}}_l \subseteq \mathbb{Y}^{\mathrm{P}}_l} \ge 1-\conf_y$. 

The state feedback in \eqref{eq:inputdecomp} introduces uncertainty into the predicted inputs $\gls{u}_{l|k}$. 
In order to accommodate this uncertainty, we approximate the hard input constraints \eqref{eq:ocp_original_inputcons} analogously to the output constraints by employing the predictor \eqref{eq:samplepred_input} based on samples $\bm{w}^{(i)}$, yielding constraint sets $\tilde{\mathbb{U}}_l$ for $l\in\mathbb{N}_0^{\gls{Tf}-1}$, where $\tilde{\mathbb{U}}_0$ is such that the actually applied control input satisfies the hard constraints $\mathbb{U}$.
Similarly, the terminal constraint set \eqref{eq:ocp_original_termcons} is approximated based on the predictor \eqref{eq:samplepred_termstate}, yielding the sampled constraint set $\tilde{\mathbb{X}}_{\gls{Tf}}$.
Finally, by intersecting the sampled constraint sets $\tilde{\mathbb{U}}_l$, $\tilde{\mathbb{Y}}_l$, $\tilde{\mathbb{X}}_{\gls{Tf}}$, $l\in\mathbb{N}_0^{\gls{Tf}-1}$, we collect all constraints on the decision variable into a single set
\begin{equation} \label{eq:constraints_red}
    \mathbb{C} = \left\{ \gls{var}_k ~\left|~ \bm{G}_{\mathrm{OCP}} \gls{var}_k \le \bm{g}_{\mathrm{OCP}}\right. \right\},
\end{equation}
with appropriate $\bm{G}_{\mathrm{OCP}}$, $\bm{g}_{\mathrm{OCP}}$.
If $\gls{var}_k = \col{\gls{xi}_k,\,\gls{v}_{\mathrm{f},k},\,\tilde{\bm{\alpha}}}\in\mathbb{C}$, then constraints \eqref{eq:ocp_original_chancecons}\textendash\eqref{eq:ocp_original_termcons} are satisfied (with confidence).

It remains to deterministically reformulate the expected cost \eqref{eq:cost_expected} in \eqref{eq:ocp_original_cost}.
Using the data-driven predictors \eqref{eq:predictors}, we can express the cost~\eqref{eq:cost_expected} in terms of the decision variable $\gls{var}_k$ and the uncertainty $\bm{w}$ as
\begin{equation} \label{eq:cost_reform}
    J_{\gls{Tf}}(\gls{var}_k) = \norm{\gls{var}_k}^2_{\E{\bm{Q}_{\mathrm{OCP}}(\bm{w})}} + \E{c(\bm{w})},
\end{equation}
with $\tilde{\bm{Q}} \coloneqq \bm{I}_{\gls{Tf}} \otimes \bm{Q}$, $\tilde{\bm{R}} \coloneqq \bm{I}_{\gls{Tf}} \otimes \bm{R}$, and cost parameters
\begin{subequations}
    \begin{align*}
        \bm{Q}_{\mathrm{OCP}}(\bm{w}) &\coloneqq {\bm{M}_y}^{\top} \tilde{\bm{Q}} \bm{M}_y + {\bm{M}_u}^{\top} \tilde{\bm{R}} \bm{M}_u + {\bm{M}_{\xi}}^{\top} \bm{P} \bm{M}_{\xi},\\
        c(\bm{w}) & \coloneqq {\bm{m}_y}^{\top} \tilde{\bm{Q}} \bm{m}_y + {\bm{m}_u}^{\top} \tilde{\bm{R}} \bm{m}_u + {\bm{m}_{\xi}}^{\top} \bm{P} \bm{m}_{\xi}.
    \end{align*}    
\end{subequations}
Note that we discarded terms that are linear in $\gls{d}_{\mathrm{f},k}$ as $\E{\gls{d}_{\mathrm{f},k}} = \bm{0}$ and the disturbances are iid by Assumption~\ref{assum:distset}.
Moreover, since $\E{c(\bm{w})}$ is a constant, it suffices to consider 
\begin{equation} \label{eq:cost_exact}
    J(\gls{var}_k) \coloneqq J_{\gls{Tf}}(\gls{var}_k) - \E{c(\bm{w})} = \norm{\gls{var}_k}^2_{\E{\bm{Q}_{\mathrm{OCP}}(\bm{w})}}.
\end{equation}
as an alternative cost for OCP~\eqref{eq:ocp_original}. As analytical evaluation of the expectation in \eqref{eq:cost_exact} is generally not practical, we employ sample average approximation (SAA). 
Based on $\gls{Nsaa}$ uncertainty samples $\bm{w}^{(i)}$, the SAA cost function that approximates \eqref{eq:cost_exact} is
\begin{equation} \label{eq:cost_sampleaverage}
    \hat{J}\left(\gls{var}_k\right) \coloneqq \norm{\gls{var}_k}^2_{\hat{\bm{Q}}_{\mathrm{OCP}}},
\end{equation}
with weight matrix $\hat{\bm{Q}}_{\mathrm{OCP}} \coloneqq (1/\gls{Nsaa})\sum_{i=1}^{\gls{Nsaa}} \bm{Q}_{\mathrm{OCP}}\big(\bm{w}^{(i)}\big)$.
\begin{remark}
    For the general case of $\gls{var}_k = \col{\gls{xi}_k,\,\gls{v}_{\mathrm{f},k},\,\tilde{\bm{\alpha}}}$ with free $\tilde{\bm{\alpha}}$ (see Remark~\ref{rem:pred}), the cost \eqref{eq:cost_sampleaverage} might be augmented by a regularization term $\mu \norm{\tilde{\bm{\alpha}}}$, $\mu > 0$, to penalize the deviation from the subspace predictor \eqref{eq:samplepred_alpha} with $\tilde{\bm{\alpha}} = \bm{0}$; see discussions in \cite{dorfler2021bridging}.
\end{remark}
\subsection{Control Algorithm} \label{sec:alg}
Based on the previously derived constraints \eqref{eq:constraints_red} and cost function \eqref{eq:cost_sampleaverage}, we define the OCP that is solved at each time-step $k$ as
\begin{align}\label{eq:ocp} 
    \hat{J}^*\left(\gls{xi}_k\right) &=  \min_{\gls{v}_{\mathrm{f},k},\,\tilde{\bm{\alpha}}} \hat{J}\left(\gls{var}_k\right)~~~ 	
    \mathrm{s.t.}~~ \gls{var}_k \in \mathbb{C} \cap \mathbb{C}_{\mathrm{R}}, 
\end{align}
where $\mathbb{C}_{\mathrm{R}}$ guarantees recursive feasibility via a robust constraint on the first predicted step and is constructed in Section~\ref{sec:recfeas}. 
We remark that a solution to \eqref{eq:ocp} can only exist if $\mathbb{C} \,\cap\, \mathbb{C}_{\mathrm{R}}$ is non-empty; sufficient conditions for non-emptiness are discussed in Section~\ref{sec:recfeas}, Remark~\ref{rem:nonemptiness}. 
The implicit control law associated with OCP~\eqref{eq:ocp} reads
\begin{equation} \label{eq:controllaw}
    \bm{\kappa}\left(\gls{xi}_k\right):= \gls{u}^{*}_k = \bm{K} \gls{xi}_k + \gls{v}^{*}_{0|k}
\end{equation}
where $\gls{v}^{*}_{0|k} = \big[\gls{v}^*_{\mathrm{f},k}\big]_{[1:m]}$ is the first input of the optimal input vector $\gls{v}^*_{\mathrm{f},k}$.
Algorithm~\ref{alg:controller} summarizes the overall control scheme.
\begin{algorithm}
\caption{Sampling-based Stochastic DPC for System \eqref{eq:system}} \label{alg:controller}
\begin{algorithmic}[1]
\renewcommand{\algorithmicrequire}{\textbf{Offline Phase:}}
\renewcommand{\algorithmicensure}{\textbf{Online Phase:}}
\REQUIRE
\STATE Retrieve an input-output data trajectory satisfying Assumption~\ref{assum:trajData}.
\STATE Compute the set of consistent disturbance data \eqref{eq:distset_consistency}.
\STATE Determine $\bm{K}$, $\bm{P}$, and $\mathbb{X}_{\gls{Tf}}$satisfying Assumption~\ref{assum:stabilizingIngredients} (e.g., see Appendix~\ref{app:stabilizingIngredients}).
\STATE Compute constraint set $\mathbb{C}$~\eqref{eq:constraints_red} from disturbance samples.
\STATE Compute the first-step constraint $\mathbb{C}_{\mathrm{R}}$ (see Section~\ref{sec:recfeas}). 
\STATE Determine the weight $\hat{\bm{Q}}_{\mathrm{OCP}}$ of the cost function \eqref{eq:cost_sampleaverage}. 
\ENSURE for all $k \ge 0$:
\STATE Construct $\gls{xi}_k$ from most recent $\gls{Tini}$ input-output measurements.
\STATE Solve the OCP \eqref{eq:ocp} to obtain $\gls{v}^{\ast}_{0|k}$.
\STATE Apply the input $\gls{u}_{k} = \gls{v}^{\ast}_{0|k} + \bm{K} \gls{xi}_k$ to the system.
\end{algorithmic}
\end{algorithm}

Since all heavy computations are performed in the offline phase, the online phase of the proposed controller entails only a dense quadratic program for which efficient solvers exist. To further reduce the online computational load, redundant constraints should be removed from the final constraint set $\mathbb{C} \cap \mathbb{C}_{\mathrm{R}}$ in \eqref{eq:ocp}; see \cite{lorenzen2017stochastic,fukuda2020polyhedral} for redundancy removal algorithms.

The data-driven predictors \eqref{eq:predictors} require samples of future disturbances $\gls{d}_{\mathrm{f},k}$ and of (consistent) disturbance data $\glsd{dd}_T$. From Assumption~\ref{assum:distset}, the distribution and the polytopic support set of $\gls{d}_{\mathrm{f},k}$ follow immediately. Similarly for the disturbance data $\glsd{dd}_T$, the distribution over the free variables $\glsd{dd}_{\glsd{xi}+\glsd{u}}$ with support \eqref{eq:distset_consistency} follows from applying the Cartesian product to the individual disturbance distribution $T$ consecutive times, and then exploiting the parameterization \eqref{eq:consistency_solutions}. To generate samples from the polytopic support sets, rejection sampling \cite{martino2010generalized} or Markov Chain Monte Carlo methods \cite{mete2012patternHR} can be employed.
\section{Closed-loop Properties} \label{sec:properties}
We now present properties of the closed-loop system that result from applying the controller~\eqref{eq:controllaw} to system \eqref{eq:system}, or, equivalently, \eqref{eq:system_arx_nonminimal}:
\begin{subequations}
    \begin{align}
        \gls{xi}_{k+1} &= \tilde{\bm{A}} \gls{xi}_{k} + \tilde{\bm{B}} \bm{\kappa}\left(\gls{xi}_{k}\right) + \tilde{\bm{E}}\gls{d}_{k}, \label{eq:system_cl_nonminimal} \\
        \gls{y}_{k} &= \bm{\Phi} \gls{xi}_{k} + \bm{\Psi} \bm{\kappa}\left(\gls{xi}_{k}\right) + \gls{d}_{k}. \label{eq:system_cl}
    \end{align}
\end{subequations}
To guarantee stability, we assume the existence of a suitable stabilizing feedback gain $\bm{K}$ \eqref{eq:inputdecomp}, weight $\bm{P}$ for the terminal cost \eqref{eq:cost_expected}, and RPI terminal constraint $\mathbb{X}_{\gls{Tf}}$ \eqref{eq:ocp_original_termcons}, based on the set \eqref{eq:setsysmat} of consistent system parameters. This is summarized in Assumption~\ref{assum:stabilizingIngredients} and is common in robust and stochastic predictive control \cite{lorenzen2017stochastic,lorenzen2019robust,arcari2023stochastic}. 
Existing literature can be used to determine these ingredients from data by solving semi-definite programs offline, e.g., \cite{van2023informativity,de2019formulas}, and Appendix~\ref{app:stabilizingIngredients}, involving data-driven linear matrix inequalities.

\begin{assumption}[Stabilizing Ingredients] \label{assum:stabilizingIngredients}
    Let $\bm{R}$, $\bm{Q} \succ \bm{0}$ be the weighting matrices from \eqref{eq:cost_expected} and let $\tilde{\mathbb{A}}\coloneqq\conv{\left\{\mat{\tilde{\bm{A}}_j & \tilde{\bm{B}}_j}\right\}_{j=1}^{N_{\mathrm{v}}}}$, $\tilde{\bm{A}}_j \coloneqq \col{\bar{\bm{A}},\, \bm{\Phi}_j}$, $\tilde{\bm{B}}_j \coloneqq \col{\bar{\bm{B}},\, \bm{\Psi}_j}$ be constructed by using \eqref{eq:system_arx_nonminimal_helper} and the vertices $\mat{\bm{\Phi}_j & \bm{\Psi}_j}$, $j \in \mathbb{N}_1^{N_{\mathrm{v}}}$ from \eqref{eq:setsysmat}.
    There exist a gain $\bm{K}$ and weighting matrix $\bm{P} = \bm{P}^{\top}  \succ \bm{0}$ such that for all $j \in \mathbb{N}_1^{N_{\mathrm{v}}}$
    \begin{itemize}
        \item[(a)] $\tilde{\bm{A}}_{\mathrm{cl},j} \coloneqq \tilde{\bm{A}}_j + \tilde{\bm{B}}_j \bm{K}$ is Schur stable, and
        \item[(b)] $\tilde{\bm{A}}^{\top}_{\mathrm{cl},j} \bm{P} \tilde{\bm{A}}_{\mathrm{cl},j} - \bm{P} + \bm{K}^{\top} \bm{R} \bm{K} + \tilde{\bm{A}}^{\top}_{\mathrm{cl},j} \tilde{\bm{E}} \bm{Q} \tilde{\bm{E}}^{\top} \tilde{\bm{A}}_{\mathrm{cl},j} \prec \bm{0} $,
    \end{itemize}
with $\tilde{\bm{E}} = \col{\bm{0},\,\bm{I}_{\glsd{y}}}$. Furthermore, the polytopic terminal set $\mathbb{X}_{\gls{Tf}}$ for system \eqref{eq:system} is RPI under the control law $\gls{u}_k = \bm{K} \gls{xi}_k$, and the constraints \eqref{eq:constraints} are satisfied $\forall\, \gls{xi}_k \in \mathbb{X}_{\gls{Tf}}$. 
\end{assumption}
Assumption~\ref{assum:stabilizingIngredients} is essentially an assumption about the informativity of the available data, akin to data informativity for quadratic stabilization \cite{van2023informativity}, due to the connection of the available data and the corresponding set of consistent system matrices \eqref{eq:setsysmat}, see Section~\ref{sec:distset_consistency}. Feasibility of the methods in Appendix~\ref{app:stabilizingIngredients} is sufficient for Assumption~\ref{assum:stabilizingIngredients}.

\subsection{Recursive Feasibility} \label{sec:recfeas}
In order to render the OCP \eqref{eq:ocp} recursively feasible, we construct an additional constraint $\mathbb{C}_{\mathrm{R}}$ on the first predicted step \cite{lorenzen2017stochastic}.
Let $\mathbb{C}_{\gls{Tf}}$ denote the set of feasible initial states and first inputs, i.e.,
\begin{equation} \label{eq:constraints_red_projected}
    \mathbb{C}_{\gls{Tf}} \coloneqq \left\{ \col{\gls{xi}_k,\,\gls{v}_{0|k}} \left|~ 
         \exists \, \gls{v}_{1|k},\, \dots,\, \gls{v}_{\gls{Tf}-1|k}, \tilde{\bm{\alpha}}:\,
         \gls{var}_k \in \mathbb{C}\right. \right\},
\end{equation}
computed by projection of \eqref{eq:constraints_red}. Based on the vertices $\tilde{\bm{A}}_{\mathrm{cl},j}$, $j \in \mathbb{N}_1^{N_{\mathrm{v}}}$ (see Assumption~\ref{assum:stabilizingIngredients}) and the disturbance bound \eqref{eq:distset}, we determine a robust control invariant set $\gls{roa}$ for system \eqref{eq:system_arx_nonminimal} with $ \col{\gls{xi}_k,\,\gls{v}_{0|k}} \in \mathbb{C}_{\gls{Tf}}$ \cite[Sec.~5.3]{blanchini2015set}. 
At last, we construct the first-step constraint set
\begin{equation} \label{eq:constraints_firststep}
    \mathbb{C}_{\mathrm{R}} \coloneqq \left\{ \gls{var}_k \left| \hspace{-1mm} \begin{array}{l} 
	\forall \,\gls{d} \in \gls{dset},\, j \in \mathbb{N}_1^{N_\mathrm{v}}: \\
    \tilde{\bm{A}}_{\mathrm{cl},j} \gls{xi}_k +\tilde{\bm{B}}_{j} \gls{v}_{0|k} + \tilde{\bm{E}} \gls{d} \in \gls{roa}
	\end{array} \hspace{-2mm}\right. \right\}.
\end{equation}
Since the constraint set $\mathbb{C} \cap \mathbb{C}_{\mathrm{R}}$ is RPI for the closed-loop dynamics~\eqref{eq:system_cl_nonminimal}, OCP~\eqref{eq:ocp} is recursively feasible.
\begin{theorem}[Recursive Feasibility \& Constraint Satisfaction] \label{th:recfeas}
   Consider the set of all feasible input sequences for given $\gls{xi}_{k}$, i.e.,
    \begin{equation} \label{eq:feasset}
        \mathbb{F}\left(\gls{xi}_{k}\right) = \left\{ \gls{v}_{\mathrm{f},k} ~\left|~ \exists \tilde{\bm{\alpha}}:~\gls{var}_k \in \mathbb{C} \cap \mathbb{C}_{\mathrm{R}} \right.\right\}.
    \end{equation} 
    Under the control law \eqref{eq:controllaw},
    $\mathbb{F}\left(\gls{xi}_{k}\right) \neq \emptyset \implies \mathbb{F}\left(\gls{xi}_{k+1}\right) \neq \emptyset$ holds for every realization of $\gls{d}_{k} \in \gls{dset}$. Moreover, for $\gls{xi}_0 \in \gls{roa}$, the closed-loop system \eqref{eq:system_cl} satisfies the output chance constraints \eqref{eq:outputcons} with confidence $\conf_y$ and the hard input constraints \eqref{eq:inputcons} for all $k \ge 0$.
\end{theorem}
\begin{proof}
      The proof follows the same arguments as in \cite[Prop.~9~\&~10]{lorenzen2017stochastic}; for details, see Appendix~\ref{app:recfeasproof}.
\end{proof}
\begin{remark} \label{rem:nonemptiness}
    Crucially, the construction of $\mathbb{C}_{\mathrm{R}}$ \eqref{eq:constraints_firststep} and feasibility of the OCP~\eqref{eq:ocp} rely on non-emptiness of the constraint set $\mathbb{C}$ \eqref{eq:constraints_red}. Given an RPI terminal constraint set $\mathbb{X}_{\gls{Tf}}$ via Assumption~\ref{assum:stabilizingIngredients}, $\mathbb{C}$ is non-empty by design: Consider the direct sampling approach \cite[Lem.~1]{mammarella2022chance} for the $\eps$-CSS approximation in Section~\ref{sec:conssampling}. For $N_{\mathrm{s}}~\rightarrow~\infty$, this corresponds to robust constraint handling. Due to the RPI property of $\mathbb{X}_{\gls{Tf}}$, the RPI terminal set $\mathbb{X}_{\gls{Tf}}$ must be contained in the set of feasible initial states $\mathbb{C}_{\xi} \subset \mathbb{C}$, which renders $\mathbb{C}$ non-empty even for this robust case. As finite $N_{\mathrm{s}}$ only soften the constraint handling, the argument still holds. A similar argument can be made if the probabilistic scaling approach \cite[Th.~1]{mammarella2022chance} is used for the $\eps$-CSS approximation with appropriate choice of approximating sets. 
\end{remark}

\subsection{Robust Asymptotic Stability in Expectation} \label{sec:stability}
We now analyze convergence properties of the closed-loop system.
Literature on stochastic predictive control often provides mean-square stability guarantees via average asymptotic cost bounds, e.g., \cite{pan2023data,mammarella2018offline,arcari2023stochastic}. In contrast, we consider a stronger notion of stability in this work, namely robust asymptotic stability in expectation (RASiE).
\begin{definition}[RASiE~\cite{mcallister2022nonlinear}]
    Let $\gls{roa}$ be a closed RPI set for system \eqref{eq:system_cl_nonminimal} with $\bm{0} \in \gls{roa}$, and let $\gls{xi}_k$ denote the solution to \eqref{eq:system_cl_nonminimal} at time $k \in \mathbb{N}_0$ for given initial condition $\gls{xi}_0$ and disturbance trajectory $\left\{\gls{d}_0,\,\dots,\,\gls{d}_k\right\}$. The origin of system \eqref{eq:system_cl_nonminimal} is robustly asymptotically stable in expectation on $\gls{roa}$ for a given distribution of $\bm{d}$ (Assumption~\ref{assum:distset}) and its associated covariance $\bm{\Sigma}\coloneqq \E{\gls{d}\gls{d}^{\top}}$ if there exist functions $\beta\in\mathscr{KL}$ and $\rho\in\mathscr{K}$ such that
    \begin{equation} \label{eq:rasie_def}
    	\E{\norm{\gls{xi}_k}} \le \beta\left(\norm{\gls{xi}_0},\,k\right) + \rho\left(\trace{\bm{\Sigma}}\right)
    \end{equation}
    for all $k \in \mathbb{N}_0$ and $\gls{xi}_0 \in \gls{roa}$.
\end{definition}
RASiE provides a uniform bound on the expected value of the norm of the closed-loop state $\gls{xi}_k$ depending on the initial condition $\gls{xi}_0$ and the disturbance covariance $\bm{\Sigma}$, and it ensures that the effect of $\gls{xi}_0$ on this bound asymptotically decays towards zero \cite{mcallister2022nonlinear}. 
In contrast to input-to-state stability (ISS) \cite{jiang2001input}, RASiE considers the expected value of the norm of $\gls{xi}_k$ and the disturbance covariance $\bm{\Sigma}$. 
RASiE can be established with the help of a stochastic ISS Lyapunov function \cite[Def.~3 ~\&~Prop.~13]{mcallister2022nonlinear}. For the construction of such Lyapunov functions, the expected cost \eqref{eq:cost_exact} is commonly employed. However, we only have access to the SAA cost \eqref{eq:cost_sampleaverage}. The following lemma provides a bound on the difference between \eqref{eq:cost_exact} and \eqref{eq:cost_sampleaverage} depending on the number \gls{Nsaa} of SAA samples.
\begin{lemma}[SAA cost bound] \label{lem:saabound}
    Let $\delta_{\mathrm{s}} \in [0,\,1)$, $\lambda_Q \ge \max_{\bm{w}} \evmax{\bm{Q}_{\mathrm{OCP}}(\bm{w})}$, and $\norm{\gls{xi}_k}^2_{\bm{P}_{\mathrm{c}}} \ge \norm{\gls{var}^*_k}^2$ for all $\gls{xi}_k \in \gls{roa}$ with $\gls{var}^*_k \coloneqq \col{\gls{xi}_k,\,\gls{v}^*_{\mathrm{f},k},\,\bm{0}}$ and feasible $\gls{v}^*_{\mathrm{f},k}$ from \eqref{eq:ocp}. Then,
    \begin{equation} \label{eq:saacostbound}
        \Pr{\forall \gls{xi}_k \in \gls{roa}:~~ \big\lvert \hat{J}^*(\gls{xi}_k) - J(\gls{var}^*_k)\big\rvert \le\tau \norm{\gls{xi}_k}^2_{\bm{P}_{\mathrm{c}}}} \ge \delta_{\mathrm{s}}
    \end{equation}
    holds for $\tau\left(\gls{Nsaa},\,\delta_{\mathrm{s}}\right) \coloneqq \sqrt{2/( \gls{Nsaa})\ln{(2\glsd{var}/(1-\delta_{\mathrm{s}}}))}\lambda_Q$.
\end{lemma}
\begin{proof}
    The result is obtained by applying the matrix Hoeffding inequality \cite[Thm.~6.15]{wainwright2019high} to $\hat{\bm{Q}}_{\mathrm{OCP}} - \E{\bm{Q}_{\mathrm{OCP}}(\bm{w})}$ from the cost functions \eqref{eq:cost_exact} and \eqref{eq:cost_sampleaverage}; details are reported in Appendix~\ref{app:saabound}.
\end{proof}
\begin{remark}
    Note that the uncertainty in $\bm{Q}_{\mathrm{OCP}}(\bm{w})$ is bounded via \eqref{eq:distset_consistency}, thus an upper bound $\lambda_Q$ exists and can be computed offline. The weight $\bm{P}_{\mathrm{c}}$ can be found by using the vertices of the set of feasible initial extended states $\gls{roa}$. Although the cost bound \eqref{eq:saacostbound} only holds with confidence $\delta_{\mathrm{s}}$, the number $\gls{Nsaa}$ of SAA samples can be chosen appropriately large to guarantee a tight bound with high confidence. Finally, we remark that the probability in \eqref{eq:saacostbound} holds uniformly for all $\gls{xi}_k \in \gls{roa}$, which is crucial for the subsequent stability analysis.
\end{remark}
To show descent in the Lyapunov function, stability proofs for predictive controllers commonly rely on the availability of a feasible candidate solution $\tilde{\gls{v}}_{\mathrm{f},k+1} \in \mathbb{F}\left(\gls{xi}_{k+1}\right)$. For sampling-based predictive controllers, guaranteeing feasibility of a candidate solution at all time-steps is in general not possible \cite{lorenzen2017stochastic}. Nonetheless, stability of the closed-loop system can be guaranteed if the probability of infeasibility of the candidate solution is sufficiently low \cite{lorenzen2017stochastic}. We define the candidate solution for the OCP~\eqref{eq:ocp} as follows (cf. \cite{mammarella2018offline}).
\begin{definition}[Candidate Solution] \label{def:candsol}
    Given the OCP \eqref{eq:ocp} and a feasible solution $\gls{v}^*_{\mathrm{f},k}$ at time-step $k$, the candidate solution for time-step $k+1$ is defined as $\tilde{\gls{v}}_{\mathrm{f},k+1} \coloneqq\col{\tilde{\gls{v}}_{0|k+1},\,\dots,\,\tilde{\gls{v}}_{\gls{Tf}-1|k+1}}$ with
    \begin{equation} \label{eq:candsol}
        \tilde{\gls{v}}_{l|k+1} \coloneqq \left\{ \begin{array}{lc}
            \gls{v}^*_{l+1|k} + \bm{K}\tilde{\bm{A}}_{\mathrm{cl}}^l\tilde{\bm{E}}\gls{d}_k, & l \in \mathbb{N}_0^{\gls{Tf}-2}, \\
             \bm{K}\tilde{\bm{A}}_{\mathrm{cl}}^l\tilde{\bm{E}}\gls{d}_k & l = \gls{Tf}-1.
        \end{array}\right.
    \end{equation}
\end{definition}
We now present our main result, namely RASiE of the origin of the closed-loop system~\eqref{eq:system_cl_nonminimal} under the proposed control law~\eqref{eq:controllaw}. 
In contrast to \cite{mammarella2018offline} where availability of the exact expected cost is assumed, we explicitly consider the SAA cost~\eqref{eq:cost_sampleaverage} via Lemma~\ref{lem:saabound}, yielding RASiE with high confidence. 

First, note that if the candidate solution is infeasible, the expected cost increase can be bounded by making use of an lower and upper bound to the optimal SAA cost $\hat{J}^*(\gls{xi}_k)$ from \eqref{eq:ocp}, cf. \cite{mammarella2018offline}. That is,
\begin{equation} \label{eq:costbound}
    \norm{\gls{xi}_k}_{\bm{P}_{\mathrm{l}}}^2 \le \hat{J}^*(\gls{xi}_k) \le \norm{\gls{xi}_k}_{\bm{P}_{\mathrm{u}}}^2~~~\forall \gls{xi}_k \in \gls{roa}
\end{equation}
with suitable matrices $\bm{P}_{\mathrm{l}},\,\bm{P}_{\mathrm{u}}\succ \bm{0}$.
The matrices ${\bm{P}}_{\mathrm{l}}$ and ${\bm{P}}_{\mathrm{u}}$ can be determined using the unconstrained infinite horizon cost and the set of feasible initial extended states $\gls{roa}$ \cite{lorenzen2017stochastic}.
Second, since system \eqref{eq:system_arx_nonminimal} is detectable, there exists an IOSS Lyapunov function $W\left(\cdot\right)$ satisfying
\begin{align}
    & W\left(\gls{xi}_{k+1}\right) -  W\left(\gls{xi}_k\right) \coloneqq \norm{\gls{xi}_{k+1}}^2_{\bm{P}_W} - \norm{\gls{xi}_k}^2_{\bm{P}_W}\notag \\
    & \le -\frac{1}{2}\norm{\gls{xi}_k}^2 + c_u \norm{\gls{u}_k}^2 + c_y \norm{\gls{y}_k}^2 + c_d \norm{\gls{d}_k}^2, \label{eq:IOSSdescent}
\end{align}
with suitable parameters $\bm{P}_W \succ \bm{0}$, $c_u$, $c_y$, $c_d > 0$ that can be determined using $\tilde{\mathbb{A}}$ from Assumption~\ref{assum:stabilizingIngredients}, cf. \cite{bongard2022robust,berberich2021design,cai2008input}.

\begin{theorem}[RASiE of the Closed-loop] \label{th:rasie}
    Let $\eps_{\mathrm{f}} \in [0,\,1)$ be an upper bound of the probability that the candidate solution \eqref{eq:candsol} is not feasible. With confidence $\delta_{\mathrm{s}}$ from \eqref{eq:saacostbound} and $\tau\left(\gls{Nsaa},\,\delta_{\mathrm{s}}\right)$ from Lemma~\ref{lem:saabound}, the origin of system \eqref{eq:system_cl_nonminimal} is robustly asymptotically stable in expectation on $\gls{roa}$ if, for all $\tilde{\bm{A}}_j$, $\tilde{\bm{B}}_j$, $j \in \mathbb{N}_1^{N_{\mathrm{v}}}$ from Assumption~\ref{assum:stabilizingIngredients},
    \begin{equation}
        \bm{Q}^{\mathrm{S}}_j \coloneqq \mat{\frac{c_{\mathrm{S}}}{2}\bm{I}_{\glsd{xi}} & \bm{0} \\ \bm{0} & \bm{R} - c_{\mathrm{S}} c_u\bm{I}_{\glsd{u}}} - \frac{\eps_{\mathrm{f}}}{1-\eps_{\mathrm{f}}}\bm{T}^{a}_{j} - \tau\bm{T}^{b}_{j} \succ \bm{0}, \label{eq:stabcond}
    \end{equation}
    where $c_{\mathrm{S}} \coloneqq \min(\evmin{\bm{Q}},\,\evmin{\bm{R}})/\max(c_u,\,c_y)$, $\bm{T}^{a}_{j} \coloneqq  \bm{T}_j(\bm{P}_{\mathrm{u}},\bm{P}_{\mathrm{l}})$, and $\bm{T}^{b}_{j} \coloneqq T_j(\bm{P}_{\mathrm{c}},-\bm{P}_{\mathrm{c}})$, with
    \begin{equation}
    \bm{T}_j(\bm{P}_{\mathrm{u}},\bm{P}_{\mathrm{l}}) \coloneqq \mat{
        \tilde{\bm{A}}_j^{\top}\bm{P}_{\mathrm{u}}\tilde{\bm{A}}_j - \bm{P}_{\mathrm{l}} & \tilde{\bm{A}}_j^{\top}\bm{P}_{\mathrm{u}}\tilde{\bm{B}}_j\\
        \tilde{\bm{B}}_j^{\top}\bm{P}_{\mathrm{u}}\tilde{\bm{A}}_j & \tilde{\bm{B}}_j^{\top}\bm{P}_{\mathrm{u}}\tilde{\bm{B}}_j
        }.
    \end{equation}
\end{theorem}
\begin{proof}
    The claim follows from \cite[Prop.~13]{mcallister2022nonlinear} with the stochastic ISS Lyapunov function $V\left(\gls{xi}_k\right) \coloneqq \hat{J}^*\left(\gls{xi}_k\right) + \left(1-\eps_{\mathrm{f}}\right)c_{\mathrm{S}} W\left(\gls{xi}_k\right)$. Descent of $V\left(\gls{xi}_k\right)$ is shown by utilizing \eqref{eq:saacostbound}, \eqref{eq:costbound}, and \eqref{eq:IOSSdescent}. Specifically, the bound on the expected increase of cost $\hat{J}^*\left(\gls{xi}_k\right)$ is derived depending on the probability of infeasibility of the candidate solution \eqref{eq:candsol} (as in \cite[Appendix~B]{mammarella2018offline}) and the SAA confidence bound~\eqref{eq:saacostbound}.
    See Appendix~\ref{app:stabproof} for the detailed proof.
\end{proof}

\begin{remark}
    For the trivial case $\eps_{\mathrm{f}} = 0$ (i.e., the candidate solution \eqref{eq:candsol} is always feasible) and for a suitable choice of $\delta_{\mathrm{s}}$, $\gls{Nsaa}$, and the cost parameters $\bm{Q}$, $\bm{R}$, the stability criterion \eqref{eq:stabcond} holds by design.
    However, a priori verification of Theorem~\ref{th:rasie} requires knowledge of $\eps_{\text{f}}$. 
    Monte Carlo methods can be used to bound $\eps_{\text{f}}$ up to a desired level of confidence \cite{homem2014monte}: using uncertainty samples, one can construct corresponding candidate solutions \eqref{eq:candsol} for every vertex of $\mathbb{C}_{\xi}^{\infty}$, and then test OCP \eqref{eq:ocp} for feasibility.
\end{remark}

\begin{remark}
    The guarantee of RASiE in Theorem~\ref{th:rasie} is based on a probabilistic cost bound (Lemma~\ref{lem:saabound}). Alternatively, the Wasserstein distance between the true distribution and the empirical distribution from the SAA samples can be leveraged to guarantee a distributionally robust version of RASiE~\cite{mcallister2023inherent}. In contrast, this leads to performance bounds as in \eqref{eq:rasie_def} that additionally depend on said Wasserstein distance, which is not the case for the proposed result.
\end{remark}

We close this section by remarking that a similar bound as in \eqref{eq:rasie_def} can be given for the output $\gls{y}_k$ using $\gls{y}_k = \tilde{\bm{E}}^{\top}\gls{xi}_{k+1}$.

\section{Numerical Evaluation} \label{sec:eval}
This section evaluates the proposed data-driven sample-based predictors and control algorithm in simulation. Consider the following linearized model of  a DC-DC converter~\cite{lazar2008input} with $\gls{Tini} = 1$ and
\begin{equation}
\bm{\Phi} = \mat{4.697 & 1 & 0.073 \\ 
                 0.083 & -0.060 & 0.997},~~\bm{\Psi} = \bm{0}.
\end{equation}
The system is subject to polytopic input and output constraints with $\norminf{\gls{u}_k}\le0.2$, $\norminf{\gls{y}_k}\le 3$, and subject to uniformly distributed disturbances with $\norminf{\left[\gls{d}_k\right]_1}\le 0.1$ and $\norminf{\left[\gls{d}_k\right]_2}\le 0.05$.
For the data collection (see Assumption~\ref{assum:trajData}), we apply random admissible inputs and record an input-output trajectory of length $T=55$. For the cost function~\eqref{eq:cost_expected}, we choose a prediction horizon of $\gls{Tf} = 6$ and the input and output weighting matrices $\bm{R} = 1,~~\bm{Q} = \mat{\col{1,\,0}~~\col{0,\,100}}$.
The feedback gain $\bm{K}$, terminal weight $\bm{P}$, and terminal set $\mathbb{X}_{\gls{Tf}}$ are computed as in Appendix~\ref{app:stabilizingIngredients} such that Assumption~\ref{assum:stabilizingIngredients} is satisfied.

We compare 4 different cases regarding the disturbance data uncertainty: 1) using \textit{consistent} samples (the proposed method), 2) using \textit{inconsistent} samples (i.e., neglecting the consistency constraint~\eqref{eq:consistency}), 3) using an \textit{estimate} (akin to \cite{pan2021stochastic}), and 4) using the \textit{exact} disturbance data. As the disturbance is uniformly distributed, maximum likelihood estimation does not admit a unique solution. Thus, we employ the Chebychev center of \eqref{eq:distset_consistency} to obtain the disturbance data estimate.

\subsubsection{Open-loop Results}
We first evaluate the open-loop prediction accuracy of the data-driven output predictor~\eqref{eq:samplepred_output}. First, we set $\gls{xi}_k = \bm{0}$ and generate $100$ random admissible input sequences $\gls{v}_{\mathrm{f},k}$. For every input sequence, we draw $1\,000$ samples of $\glsd{dd}^{(i)}$, and compute the corresponding output prediction $\gls{y}_{\mathrm{f},k}^{(i)}$ using~\eqref{eq:samplepred_output} with $\gls{d}_{\mathrm{f},k} = \bm{0}$. Then, we compute the root-mean-square error (RMSE) between the (sampled) output predictions and the exact output trajectory that results from applying the given input sequence. An example scenario is depicted in Fig.~\ref{fig:pred}. For this simulation, predictors based on consistent disturbance data samples emit a mean reduction in RMSE of $47.19\,\%$ compared to using inconsistent samples, and $33.05\,\%$ compared to using the disturbance data estimate.

\begin{figure}
    \centering
    \includegraphics[clip, trim=3mm 22mm 2mm 1mm]{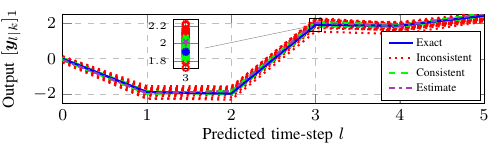} 
    \caption{Comparison of (sampled) open-loop output predictions.}
    \label{fig:pred}
\end{figure}

\subsubsection{Closed-loop Results}
We now evaluate the performance of the proposed controller for the different cases detailed above. With $\bm{\xi}_0 \coloneqq \col{0,\,0,\,2.8}$, the control goal is to stabilize the origin of the system while satisfying constraints~\eqref{eq:constraints}. In a Monte Carlo simulation of $100$ runs, the controller is applied for $30$ time-steps, where at each time-step a newly generated disturbance realization affects the system. The simulations are carried out in MATLAB using the \texttt{quadprog} solver. For the constraint sampling, we employ the direct approximation \cite[Lem.~1]{mammarella2022chance} with the risk parameter $\eps = 0.05$ and confidence $\conf_y = 10^{-4}$.

Fig.~\ref{fig:traj} shows trajectories from 50 exemplary runs of the controller based on consistent disturbance data samples. The probabilistic constraint tightening of the proposed scheme allows the system to operate close to the constraint boundary, leading to fast convergence. No constraint violations occur in any run due to the conservatism of both the robust first-step constraint and the sampling-based $\eps$-CSS approximation. Table~\ref{tab:cost} compares relative increase in trajectory cost $J_{\mathrm{tot}}= \textstyle \sum_{k=0}^{29} \big(\gls{y}_{k}^{\top} \bm{Q}  \gls{y}_{k} + \gls{u}_{k}^{\top}\bm{R}\gls{u}_{k} \big)$ compared to the exact data case. We observe that using consistent rather than inconsistent disturbance data samples results in significantly less cost increase. Using the disturbance data estimate can emit even lower costs than the exact data case, but also results in loss of closed-loop guarantees from Section~\ref{sec:properties}, and is thus not suitable for addressing Problem~\ref{problem}.

\begin{figure}
    \centering
    \includegraphics[clip, trim=3mm 1mm 2mm 26mm]{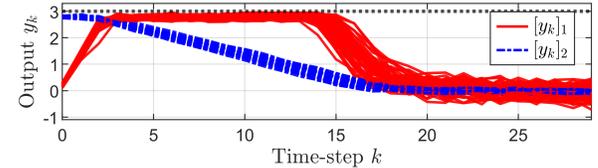}
    \caption{Trajectories of 50 exemplary runs under the proposed controller subject to disturbances. Constraints are shown in dotted black lines.}
    \label{fig:traj}
\end{figure}

\begin{table}
    \centering
    
    \caption{Increase of trajectory cost $J_{\mathrm{tot}}$ relative to exact case }
    \begin{tabular}{lrrrr} \hline
         & Min~~ & Mean~ & Max~~ & Std Dev \\ \hline
        Consistent: & $0.66\,\%$ & $0.81\,\%$ & $2.64\,\%$ & $0.23\,\%$\\
        Inconsistent:  & $8.24\,\%$ & $11.02\,\%$ & $14.60\,\%$ & $1.39\,\%$\\ 
        Estimate:  & $-0.22\,\%$ & $-0.09\,\%$ & $0.96\,\%$ & $0.20\,\%$\\ \hline
    \end{tabular}
    \label{tab:cost}
    
\end{table}

\section{Conclusion} \label{sec:conclusion}
In this paper, we proposed a stochastic output-feedback DPC scheme for the data-driven control of LTI systems subject to bounded additive disturbances. Opposed to related approaches that rely on exact disturbance measurements, we leverage a novel parameterization of the unknown consistent disturbance data for sampling-based approximation of chance constraints. This parameterization implicitly translates the distribution over disturbance data into a distribution over model parameters. Thus, our approach can be seen as the direct counterpart to an indirect approach of first identifying a distribution over models from the data, and then employing stochastic MPC \cite{teutsch2024adaptive}. Closed-loop constraint satisfaction and RASiE hold with predefined confidence under standard assumptions. A numerical example demonstrates that the use of consistent disturbance data samples allows for significant improvement in prediction and control performance.

\appendix


\subsection{On Verifiability of Assumption~\ref{assum:trajData}} \label{app:verify}

Lemma~\ref{lem:extfundamental} relies on a PE condition involving both input and disturbance data, see Assumption~\ref{assum:trajData}. Since the disturbance data are not available in our setting, one may ask how Assumption~\ref{assum:trajData} can be verified. The following lemma connects the PE condition with the rank of an input-output data matrix.
\begin{lemma}[Rank of Data Matrices]\label{lem:rank}
    Consider a control-lable LTI system $\Sigma$ of the form \eqref{eq:system_minimal} and the data trajectories $\glsd{ud}_T$, $\glsd{dd}_T$, $\glsd{yd}_T$, and $\glsd{xid}_{T}$ where $T \ge L \coloneqq \gls{Tini} + \gls{Tf}$, $\gls{Tf} \in \mathbb{N}$, and $\gls{Tini} \ge \lag{\Sigma}$ holds. If the trajectory of generalized inputs $\tilde{U}_T \coloneqq\left\{\col{\gls{ud}_i,\, \gls{dd}_i}\right\}_{i=-\gls{Tini}+1}^{T}$ is PE of order $n + L$, then
    \begin{equation} \label{eq:PErank}
        \rank{\mat{\bm{H}_x \\ \hankel{L}{\tilde{U}_T}}} = \glsd{x} + (\glsd{u}+\glsd{y})L
    \end{equation}
    holds, with $\bm{H}_x = \mat{\gls{xd}_1 ~\cdots~\gls{xd}_{T-\gls{Tf}+1}}$. Furthermore, we have
    \begin{equation} \label{eq:rankeq}
        \rank{\mat{\hankel{1}{\glsd{xid}_{T-\gls{Tf}+1}} \\ \hankel{\gls{Tf}}{\glsd{ud}_{T}} \\ \hankel{\gls{Tf}}{\glsd{yd}_{T}}}} = \rank{\mat{\bm{H}_x \\ \hankel{L}{\tilde{U}_T}}}.
    \end{equation}
\end{lemma}
\begin{proof}
    As system~\eqref{eq:system_minimal} is controllable, \eqref{eq:PErank} follows directly from the PE condition on the generalized input $\col{\gls{u}_k,\, \gls{d}_k}$ \cite[Cor.~2]{willems2005note}.    
    Now, consider the extended observability matrix $\mathcal{O}_{L} \coloneqq \col{\bm{C}, \bm{C} \bm{A}, \ldots, \bm{C} \bm{A}^{L-1}}$ and the Toeplitz matrices $\mathcal{T}_{L}^{u} \coloneqq \mathcal{T}_{L}(\bm{A},\,\bm{B},\,\bm{C},\,\bm{D})$ and $\mathcal{T}_{L}^{d} \coloneqq \mathcal{T}_{L}(\bm{A},\,\bm{E},\,\bm{C},\,\bm{I}_p)$ with
    \begin{equation}
        \mathcal{T}_{L}(\bm{A},\,\bm{B},\,\bm{C},\,\bm{D}) \coloneqq \mat{\bm{D} & \bm{0} & \cdots & \bm{0} \\ \bm{C}\bm{B} & \bm{D} & \ddots & \vdots \\ \vdots & \ddots & \ddots & \bm{0} \\ \bm{C}\bm{A}^{L-2}\bm{B} & \cdots & \bm{C}\bm{B} & \bm{D}}.
    \end{equation}
    With suitable permutation matrices $\bm{\Pi}_{uy}$ and $\bm{\Pi}_{\tilde{u}}$, we have
    \begin{equation} \label{eq:rankeq_helper}
        \bm{\Pi}_{uy}\mat{\hankel{1}{\glsd{xid}_{\tilde{T}}} \\ \hankel{\gls{Tf}}{\glsd{ud}_{T}} \\ \hankel{\gls{Tf}}{\glsd{yd}_{T}}} = \mat{\bm{0} & \bm{I}_{\glsd{u}\gls{Tf}} & \bm{0} \\ \bm{\mathcal{O}}_L & \bm{\mathcal{T}}^u_L & \bm{\mathcal{T}}^d_L} \bm{\Pi}_{\tilde{u}} \mat{\bm{H}_x \\ \hankel{L}{\tilde{U}_T}},
    \end{equation}
    where $\tilde{T} \coloneqq T-\gls{Tf}+1$, cf. \cite[Sec.~2]{de2019formulas}. Note that the permutation matrices $\bm{\Pi}_{uy}$ and $\bm{\Pi}_{\tilde{u}}$ are square and invertible by design. Since $L > \gls{Tini}$, $\bm{\mathcal{O}}_L$ has full column rank. Since $\bm{\mathcal{T}}^d_L$ is a lower triangular matrix with ones on the diagonal, also $\bm{\mathcal{T}}^d_L$ has full column rank. Thus, \eqref{eq:rankeq} follows from \eqref{eq:rankeq_helper}.
\end{proof}

Condition \eqref{eq:PErank} is necessary for Lemma~\ref{lem:extfundamental} (cf. \cite[Lem.~2]{de2019formulas}), and via Lemma~\ref{lem:rank}, the PE condition on the data in Assumption~\ref{assum:trajData} guarantees that \eqref{eq:PErank} holds. In order to verify whether \eqref{eq:PErank} is satisfied without access to disturbance data, one can evaluate the left-hand side of \eqref{eq:rankeq}. However, note that the PE condition in Assumption~\ref{assum:trajData} is only sufficient and not necessary for \eqref{eq:PErank}; we refer to \cite{van2021beyond} for a further discussion on this topic.

\subsection{Set of Consistent Disturbance Data: Incorporating Prior Model Knowledge and Illustrative Example} \label{app:example}
Here, we discuss how to potentially available prior model knowledge (e.g., knowledge on the structure and entries of $\bm{\Phi}$, $\bm{\Psi}$) can be incorporated in the set of consistent disturbance data \eqref{eq:distset_consistency}, and we give an example that illustrates the findings regarding sets of consistent disturbance data.

\subsubsection{Prior Model Knowledge}
Note that the system matrices $\bm{\Phi}$, $\bm{\Psi}$ of \eqref{eq:system} are related to the disturbance data $\glsd{dd}_{T}$ via \eqref{eq:connection_sysparams_data}. Suppose prior model knowledge for the system parameters $\bm{\Phi}$, $\bm{\Psi}$ is given in the following form with parameter matrices $\bm{G}_{\mathbb{A},1}$, $\bm{G}_{\mathbb{A},2}$, $\bm{G}_{\mathbb{A},3}$ (cf. \cite{alanwar2023data}):
\begin{equation} \label{eq:priormodelknowledge}
   \bm{G}_{\mathbb{A},1}\mat{\bm{\Phi} & \bm{\Psi}} \bm{G}_{\mathbb{A},2} \le \bm{G}_{\mathbb{A},3}.
\end{equation}
The set of system parameters \eqref{eq:priormodelknowledge} is unbounded in general (i.e., bounds are not given for all entries of $\bm{\Phi}$, $\bm{\Psi}$) or might even have an empty interior (i.e., entries of $\bm{\Phi}$, $\bm{\Psi}$ might be exactly known).
In order to incorporate prior model knowledge of the form \eqref{eq:priormodelknowledge} into the set of consistent disturbance data \eqref{eq:distset_consistency}, we first express \eqref{eq:connection_sysparams_data} in terms of $\glsd{dd}_{\glsd{xi}+\glsd{u}}$, i.e., the elements of the set of consistent disturbance data \eqref{eq:distset_consistency}. By substituting \eqref{eq:consistency_solutions} in \eqref{eq:connection_sysparams_data}, we retrieve
\begin{equation}
    \mat{\bm{\Phi} & \bm{\Psi}} = \bm{\Gamma}^{\mathbb{A}}_1 + \hankel{1}{\glsd{dd}_{\glsd{xi}+\glsd{u}}}\bm{\Gamma}^{\mathbb{A}}_2, \label{eq:connection_sysparams_data_reduced}
\end{equation}
with the purely data-dependent parameters 
\begin{align}
    \bm{\Gamma}^{\mathbb{A}}_1 = \left(\hankel{1}{\glsd{yd}_{T}} - \bm{\Gamma}_{1}\right) \bm{S}^{\dagger},~~
    \bm{\Gamma}^{\mathbb{A}}_2 = - \bm{\Gamma}_{2}\bm{S}^{\dagger}. 
\end{align}
Now, using \eqref{eq:connection_sysparams_data_reduced}, we can translate \eqref{eq:priormodelknowledge} into constraints for the disturbance data $\glsd{dd}_{\glsd{xi}+\glsd{u}}$, i.e., 
\begin{equation}
    \bm{G}_{\mathbb{A},1} \hankel{1}{\glsd{dd}_{\glsd{xi}+\glsd{u}}}\bm{\Gamma}^{\mathbb{A}}_2 \bm{G}_{\mathbb{A},2} \le \bm{G}_{\mathbb{A},3} - \bm{G}_{\mathbb{A},1}\bm{\Gamma}^{\mathbb{A}}_1 \bm{G}_{\mathbb{A},2}
\end{equation}
which can be incorporated as additional constraints into \eqref{eq:distset_consistency}. Note that the interior of \eqref{eq:distset_consistency} becomes empty if \eqref{eq:priormodelknowledge} has empty interior (the number of free parameters is reduced).

\subsubsection{Example}
Consider a system of the form \eqref{eq:system} with $\gls{Tini} = 1$ and the matrices $\bm{\Phi} = \mat{1 & 1}$, $\bm{\Psi} = 0$. From this system, a data trajectory $\{\gls{ud}_i\}_{i=0}^{T}$, $\{\gls{yd}_i\}_{i=0}^{T}$ of length $T=40$ is collected, satisfying Assumption~\ref{assum:trajData}, with inputs and disturbances randomly chosen within the bounds $\norminf{\gls{u}_k} \le 0.3$, $\norminf{\gls{d}_k} \le 0.1$ according to a uniform distribution. Using these data, the set of consistent disturbance data \eqref{eq:distset_consistency} is built (see Section~\ref{sec:distset_consistency}), which is illustrated in Fig.~\ref{fig:ex_cons}. It can be seen that the set of disturbance trajectories satisfying the consistency constraint~\eqref{eq:consistency} is remarkably smaller than the region that is defined by the original bound $\norminf{\gls{d}_k} \le 0.1$. As also depicted in Fig.~\ref{fig:ex_cons}, the size of the set can be further reduced by considering prior model knowledge $\bm{\Phi} \mat{0&1}^{\top} = 1$, $\bm{\Psi} = 0$.

\begin{figure}
    \centering
    \includegraphics[width=0.5\textwidth]{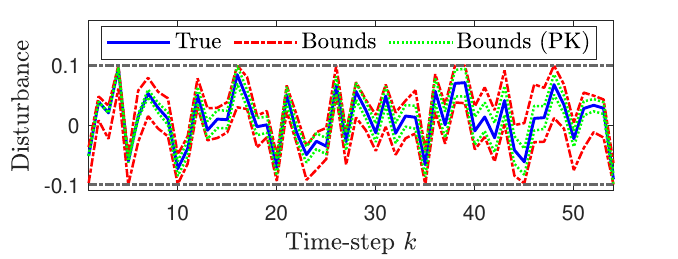}
    \caption{Visualization of the consistency constraint on the disturbance data. The original disturbance bound is shown in black, while the true (unknown) disturbance data are given in blue. Bounds resulting from the set of consistent disturbance data \eqref{eq:distset_consistency} are depicted in red. The bounds depicted in green additionally include prior model knowledge.}
    \label{fig:ex_cons}
\end{figure}


\subsection{Sampling-based Approximation of $\eps$-CSS} \label{app:samplingprelims}

Here, we briefly motivate and present two popular approaches and the relevant theory for sampling-based inner-approximations of $\eps$-CCSs. For a detailed discussion, see \cite{mammarella2022chance}.

When uncertainties propagate through the system dynamics nonlinearly (e.g., parametric uncertainty), or when uncertainties follow a non-Gaussian distribution, reformulating the stochastic chance constraints into tractable deterministic expressions is challenging. In such cases, sampling-based methods provide simple approaches for the deterministic approximation of chance constraints. A popular choice is known as the \textit{scenario} approach \cite{farina2016stochastic}, where the chance constraints are replaced by hard constraints that must be satisfied for a specified number of predicted sample trajectories, resulting from samples of the uncertainty drawn \textit{online} for every MPC iteration. Although the application is simple, the main disadvantages of scenario MPC are 1) high online computational complexity and 2) lack of closed-loop guarantees. To overcome these issues, \textit{offline}-sampling approaches have been proposed that aim to directly obtain a deterministic approximation of the chance constraints using samples of the uncertainty \cite{mammarella2022chance,lorenzen2017stochastic}. This allows for reduction of the online computational complexity of the controller, as well as closed-loop guarantees \cite{lorenzen2017stochastic,mammarella2018offline}.

Consider a general joint chance constraint $\Pr{\bm{G}_{\zeta}(\bm{w}) \bm{\zeta} \le \bm{g}_{\zeta}(\bm{w})} \ge 1-\eps$ where $\bm{\zeta}$ is the (deterministic) decision variable, and $\bm{G}_{\zeta}(\bm{w})\in\mathbb{R}^{n_{\mathrm{c}} \times n_{\zeta}}$, $\bm{g}_{\zeta}(\bm{w}) \in \mathbb{R}^{n_{\mathrm{c}}}$ are constraint parameters that depend on the realization $\bm{w} \in \mathbb{R}^{n_w}$ of a multivariate random variable. The corresponding $\eps$-CCS is defined as 
\begin{equation} \label{eq:chanceconsset}
    \mathbb{Z}^{\mathrm{P}} = \left\{ \bm{\zeta} \in \mathbb{R}^{n_{\zeta}} ~\left|~ \Pr{\bm{G}_{\zeta}(\bm{w}) \bm{\zeta} \le \bm{g}_{\zeta}(\bm{w})} \ge 1-\eps \right. \right\}.
\end{equation}
The goal of offline-sampling-based approaches is to determine a deterministic inner-approximation $\mathbb{Z}^{\mathrm{S}}$ of the $\eps$-CCS \eqref{eq:chanceconsset} by using $N_{\mathrm{s}}$ iid uncertainty samples $\bm{w}^{(i)}$, $i \in \mathbb{N}_1^{N_{\mathrm{s}}}$, yielding $\Pr{\mathbb{Z}^{\mathrm{S}} \subseteq \mathbb{Z}^{\mathrm{P}}} \ge 1-\conf$ with user-chosen confidence $\conf$.
The two popular approaches presented in the following, namely the direct sampling-based approximation and the probabilistic scaling approach, make use of the sampled set corresponding to \eqref{eq:chanceconsset}, i.e. (for a given sample $\bm{w}^{(i)}$)
\begin{equation} \label{eq:sampledconsset}
    \tilde{\mathbb{Z}}^{\mathrm{S}}\left(\bm{w}^{(i)}\right) = \left\{ \bm{\zeta} \in \mathbb{R}^{n_{\zeta}} \,\left|\, {\bm{G}_{\zeta}\left(\bm{w}^{(i)}\right)} \bm{\zeta} \le \bm{g}_{\zeta}\left(\bm{w}^{(i)}\right)\right. \right\}.
\end{equation}

\subsubsection{Direct Sampling-based Approximation} \label{sec:directsamplapprox}
For the direct sampling-based approximation of the $\eps$-CCS~\eqref{eq:chanceconsset}, the $N_{\mathrm{s}}$ samples of $\bm{w}$ and \eqref{eq:sampledconsset} are used to define the sampled set $\mathbb{Z}^{\mathrm{S}}_{\mathrm{LT}} \coloneqq \cap_{i=0}^{N_{\mathrm{s}}}\, \tilde{\mathbb{Z}}^{\mathrm{S}}\left(\bm{w}^{(i)}\right)$. 
The following result from statistical learning theory~\cite{alamo2009randomized} allows us to determine the required number of samples $N_{\mathrm{s}}$ (i.e., the sample complexity) for which the sampled set $\mathbb{Z}^{\mathrm{S}}_{\mathrm{LT}}$ is a subset of the $\eps$-CSS~\eqref{eq:chanceconsset} with a predefined confidence $\conf$. 
\begin{proposition}[Learning Theory Bound~\cite{mammarella2022chance}] \label{prop:sampling}
    For any risk parameter $\eps \in (0,0.14)$, confidence level $\conf \in (0,1)$, and sample complexity $N_{\mathrm{s}} \ge N_{\mathrm{LT}}(\eps,\,\conf,\,n_{\zeta},\,n_{\mathrm{c}})$  with
    \begin{equation} \label{eq:samplecomplexity}
       N_{\mathrm{LT}} \coloneqq \frac{4.1}{\eps}\left( \ln \frac{21.64}{\conf} + 4.39n_{\zeta} \log_2 \frac{8 e n_{\mathrm{c}}}{\eps}\right)
    \end{equation}
    and Euler's number $e$, it holds that $\Pr{\mathbb{Z}^{\mathrm{S}}_{\mathrm{LT}} \subseteq \mathbb{Z}^{\mathrm{P}}} \ge 1-\conf$.
\end{proposition}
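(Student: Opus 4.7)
The plan is to translate the set-inclusion claim $\mathbb{Z}^{\text{S}}_{\text{LT}} \subseteq \mathbb{Z}^{\text{P}}$ into a uniform-convergence statement over the family of linear constraint systems indexed by $\bm{\zeta}$. Observe that the failure event $\{\mathbb{Z}^{\text{S}}_{\text{LT}} \not\subseteq \mathbb{Z}^{\text{P}}\}$ is equivalent to the existence of some $\bm{\zeta} \in \mathbb{R}^{n_{\zeta}}$ that simultaneously (i) satisfies $\bm{G}_{\zeta}(\bm{w}^{(i)})\bm{\zeta} \le \bm{g}_{\zeta}(\bm{w}^{(i)})$ for every one of the $N_{\text{s}}$ iid samples, and (ii) has true violation probability $\Pr[\bm{G}_{\zeta}(\bm{w})\bm{\zeta} \not\le \bm{g}_{\zeta}(\bm{w})] > \eps$. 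Bounding the probability of such a ``bad'' $\bm{\zeta}$ existing by $\conf$ is exactly a PAC-style uniform-consistency requirement, and this is the form in which the classical statistical learning theory can be brought to bear.

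The second step is to identify the relevant concept class. Each candidate $\bm{\zeta}$ induces the event $A_{\bm{\zeta}} = \{\bm{w} : \bm{G}_{\zeta}(\bm{w})\bm{\zeta} \le \bm{g}_{\zeta}(\bm{w})\}$, which is an intersection of $n_{\text{c}}$ half-spaces in $\bm{w}$-space, and the mapping $\bm{\zeta} \mapsto A_{\bm{\zeta}}$ is parameterized linearly by the $n_{\zeta}$ coordinates of $\bm{\zeta}$. Standard VC-dimension estimates for such parametric intersections of $n_{\text{c}}$ half-spaces yield a bound polynomial in $n_{\zeta}$ and $\log n_{\text{c}}$; the classical Vapnik--Chervonenkis inequality then produces a sample-complexity bound of the general shape $N_{\text{s}} = O\!\left(\tfrac{1}{\eps}\bigl(\log\tfrac{1}{\conf} + n_{\zeta}\log\tfrac{n_{\text{c}}}{\eps}\bigr)\right)$.

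The third step is to recover the precise numerical constants $4.1$, $21.64$, $4.39$ and the term $\log_2 (8 e n_{\text{c}}/\eps)$ appearing in \eqref{eq:samplecomplexity}. Rather than rederiving them, the natural route is to invoke directly the randomized-algorithms result of Alamo et al., which is tailored to exactly this setting of parametric linear chance constraints and whose statement already packages the Chernoff tail bound and the growth-function estimate into the form of $N_{\text{LT}}$. One checks that the hypotheses of that theorem hold here (iid sampling from an arbitrary distribution on $\bm{w}$, parameter dimension $n_{\zeta}$, constraint count $n_{\text{c}}$, risk $\eps$, confidence $\conf$) and the conclusion translates verbatim.

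The main obstacle is conceptually minor but technically the most delicate: matching the abstract VC/growth-function bound to the specific numerical form of $N_{\text{LT}}$ requires care with the covering-number inequalities used in that reference. A secondary subtlety is that the result holds only for $\eps \in (0,0.14)$, reflecting the regime in which the learning-theoretic bound actually improves on trivial Chernoff estimates; this restriction must be carried through any adaptation of the argument. Beyond that, the proof is essentially a direct invocation of the cited theorem applied to the concrete concept class of half-space intersections in $\bm{w}$.
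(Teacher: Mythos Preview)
The paper does not prove this proposition; it is quoted as a known result from \cite{mammarella2022chance} (itself building on \cite{alamo2009randomized}), and is used as a black box in Appendix~\ref{app:samplingprelims}. There is therefore no ``paper's own proof'' to compare against.

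Your sketch is a reasonable outline of the argument one would find in those references: recasting the failure event as a uniform-deviation problem, bounding the VC dimension of the family of $n_{\text{c}}$-fold half-space intersections parameterized linearly by $\bm{\zeta}\in\mathbb{R}^{n_{\zeta}}$, and then applying the VC inequality with the explicit constants worked out by Alamo et al. One point to be careful about: the sets $A_{\bm{\zeta}}$ are intersections of $n_{\text{c}}$ constraints that are linear in $\bm{\zeta}$, not necessarily half-spaces in $\bm{w}$-space, since $\bm{G}_{\zeta}(\bm{w})$ and $\bm{g}_{\zeta}(\bm{w})$ may depend on $\bm{w}$ in an arbitrary (measurable) way. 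The relevant structure for the VC bound is the linearity in the parameter $\bm{\zeta}$, which makes the family of sets $\{\bm{w}:[\bm{G}_{\zeta}(\bm{w})\bm{\zeta}-\bm{g}_{\zeta}(\bm{w})]_{[j]}\le 0\}$ a concept class of VC dimension at most $n_{\zeta}+1$ for each row $j$, and hence the $n_{\text{c}}$-fold intersection has growth function bounded via Sauer's lemma. With that correction your plan matches the standard derivation.
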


Application of Proposition~\ref{prop:sampling} in the context of sampling-based Stochastic MPC was first presented in \cite{lorenzen2017stochastic} for the case of single chance constraints (i.e., $n_{\mathrm{c}} = 1$). A disadvantage of Proposition~\ref{prop:sampling} is that the sample complexity bound \eqref{eq:samplecomplexity} is rather conservative, easily leading to millions of sampled constraints even for small-scale systems \cite{mammarella2022chance}. Although the sampled constraints in \eqref{eq:sampledconsset} are generally highly redundant and can be reduced offline \cite{lorenzen2017stochastic}, the final number of constraints might still be too large to guarantee real-time implementability of the resulting predictive controller. For this reason, approaches have emerged that probabilistically scale a pre-defined set of fixed complexity to retrieve an inner-approximation of the $\eps$-CSS~\eqref{eq:chanceconsset}, as presented next.  

\subsubsection{Approximation via Probabilistic Scaling} \label{sec:probscalingapprox}
This approach is based on the idea of approximating the $\eps$-CSS \eqref{eq:chanceconsset} via a scalable simple approximating set (SAS) 
\begin{equation}
    \mathbb{Z}^{\mathrm{S}}(\sigma) \coloneqq \{\bm{\zeta}_{\mathrm{c}}\} \oplus \sigma \mathbb{Z}^{\mathrm{SAS}},
\end{equation}
with the center $\bm{\zeta}_{\mathrm{c}}$, the shape $\mathbb{Z}^{\mathrm{SAS}}$, and the scaling factor $\sigma \ge 0$. The designer controls the complexity of the approximating set by suitable choice of the design parameters $\bm{\zeta}_{\mathrm{c}}$, $\mathbb{Z}^{\mathrm{SAS}}$.

The goal of the probabilistic scaling approach is to find an optimal scaling factor $\sigma^*$ such that $\Pr{\mathbb{Z}^{\mathrm{S}}(\sigma^*) \subseteq \mathbb{Z}^{\mathrm{P}}} \ge 1-\conf$ with a desired level of confidence $\conf$ by exploiting samples. Before we can proivde the probabilistic scaling approach, consider the following definition of the scaling factor.
\begin{definition}[Scaling Factor~\cite{mammarella2022chance}] \label{def:scaling}
    For a given SAS $\mathbb{Z}^{\mathrm{S}}(\sigma)$ with center $\bm{\zeta}_{\mathrm{c}}$ and shape $\mathbb{Z}^{\mathrm{SAS}}$, and a sample $\bm{w}$, the scaling factor $\sigma(\bm{w})$ of $\mathbb{Z}^{\mathrm{S}}(\sigma)$ relative to $\bm{w}$ is defined as
    \begin{equation} \label{eq:scaling}
        \sigma(\bm{w}) \coloneqq \left\{ \begin{array}{cc}
            \max\limits_{\mathbb{Z}^{\mathrm{S}}(\sigma) \subseteq \tilde{\mathbb{Z}}^{\mathrm{S}}(\bm{w})} \sigma & \mathrm{if}~ \bm{\zeta}_{\mathrm{c}} \in \tilde{\mathbb{Z}}^{\mathrm{S}}(\bm{w}) \\
            0 & \mathrm{otherwise.}
        \end{array}\right.
    \end{equation}
\end{definition}
\begin{proposition}[Probabilistic Scaling of SAS~\cite{mammarella2022chance}] \label{prop:scaling}
    For a given candidate SAS $\mathbb{Z}^{\mathrm{S}}(\sigma)$ with center $\bm{\zeta}_{\mathrm{c}} \in \mathbb{Z}^{\mathrm{P}}$, any risk parameter $\eps \in (0,1)$, and confidence level $\conf \in (0,1)$, let the sample complexity $N_{\mathrm{s}}$ be chosen as $N_{\mathrm{s}} \ge N_{\mathrm{PS}}(\eps,\,\conf)$, with
     \begin{equation} \label{eq:samplecomplexity_scaling}
         N_{\mathrm{PS}} \coloneqq \frac{7.47}{\eps}\ln\frac{1}{\conf}.
     \end{equation}
     Furthermore, for $N_{\mathrm{s}}$ iid uncertainty samples $\bm{w}^{(i)}$, $i \in \mathbb{N}_1^{N_{\mathrm{s}}}$, let $\bm{\sigma}$, $[\bm{\sigma}]_{[i]} = \sigma\left(\bm{w}^{(i)}\right)$, be the vector of scaling factors determined via Definition~\ref{def:scaling}. Then, $\Pr{\mathbb{Z}^{\mathrm{S}}(\sigma^*) \subseteq \mathbb{Z}^{\mathrm{P}}} \ge 1-\conf$ holds, where $\sigma^*$ is the $N_{\mathrm{r}}$-th smallest entry of $\bm{\sigma}$, with the discarding parameter $N_{\mathrm{r}} = \lceil \frac{\eps N_{\mathrm{s}}}{2} \rceil$ and the ceil-function~$\lceil\cdot\rceil$.
\end{proposition}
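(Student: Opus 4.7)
The plan is to reduce the claim to a one-dimensional scenario-with-discarding argument on the order statistics of $\{\sigma(\bm{w}^{(i)})\}_{i=1}^{N_{\text{s}}}$. First I would use Definition~\ref{def:scaling} to observe the monotone equivalence $\mathbb{Z}^{\text{S}}(\sigma) \subseteq \tilde{\mathbb{Z}}^{\text{S}}(\bm{w}) \iff \sigma \le \sigma(\bm{w})$, and introduce the violation function $V(\sigma) := \Pr{\sigma(\bm{w}) < \sigma}$. Then for every $\bm{\zeta} \in \mathbb{Z}^{\text{S}}(\sigma)$ one has $\Pr{\bm{G}_{\zeta}(\bm{w})\bm{\zeta} \le \bm{g}_{\zeta}(\bm{w})} \ge \Pr{\mathbb{Z}^{\text{S}}(\sigma) \subseteq \tilde{\mathbb{Z}}^{\text{S}}(\bm{w})} \ge 1 - V(\sigma)$, so $V(\sigma) \le \eps$ is a sufficient condition for $\mathbb{Z}^{\text{S}}(\sigma) \subseteq \mathbb{Z}^{\text{P}}$. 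Hence it suffices to establish $\Pr{V(\sigma^*) > \eps} \le \conf$, where $\sigma^*$ is the $N_{\text{r}}$-th smallest of $\sigma(\bm{w}^{(1)}), \dots, \sigma(\bm{w}^{(N_{\text{s}})})$.

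Next I would analyze $V(\sigma^*)$ via the Binomial--order-statistic correspondence. Letting $q_\eps$ denote the $\eps$-quantile of $\sigma(\bm{w})$, the event $V(\sigma^*) > \eps$ is contained in $\{\sigma^* > q_\eps\}$, which in turn holds exactly when at most $N_{\text{r}} - 1$ of the samples $\bm{w}^{(i)}$ satisfy $\sigma(\bm{w}^{(i)}) \le q_\eps$. A Bernoulli domination handles possible atoms in the distribution of $\sigma(\bm{w})$, yielding $\Pr{V(\sigma^*) > \eps} \le \Pr{\text{Binom}(N_{\text{s}},\,\eps) \le N_{\text{r}} - 1}$. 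This is the standard ``scenario with $N_{\text{r}}-1$ discarded samples'' bound specialized to a one-dimensional decision (the scalar $\sigma$).

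The final step is to control the binomial left tail. With $N_{\text{r}} = \lceil \eps N_{\text{s}}/2 \rceil$, applying the multiplicative Chernoff inequality with relative deviation $\delta \approx 1/2$ gives $\Pr{\text{Binom}(N_{\text{s}},\,\eps) \le N_{\text{r}} - 1} \le \exp(-\eps N_{\text{s}}/8)$, which is below $\conf$ whenever $N_{\text{s}} \ge (8/\eps)\ln(1/\conf)$. The sharper constant $7.47$ in \eqref{eq:samplecomplexity_scaling} is then obtained by replacing the generic multiplicative Chernoff bound with a tighter numerical estimate of $\sum_{k=0}^{N_{\text{r}}-1}\binom{N_{\text{s}}}{k}\eps^k(1-\eps)^{N_{\text{s}}-k}$, as carried out in \cite{mammarella2022chance}.

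The main obstacle I anticipate is the bookkeeping around the integer rounding of $N_{\text{r}}$ and, when the law of $\sigma(\bm{w})$ has atoms, the careful inequality manipulations needed to preserve the Binomial comparison in both directions; these are technicalities, however, and once dispatched the entire argument reduces to the classical scenario-with-discarding calculation together with a Chernoff tail estimate.
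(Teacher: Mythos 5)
The paper does not actually prove Proposition~\ref{prop:scaling}: it is imported verbatim from \cite{mammarella2022chance} as background material in Appendix~\ref{app:samplingprelims}, so there is no in-paper argument to compare yours against. Your reconstruction follows the standard probabilistic-scaling proof from that reference and is sound in outline: the reduction of set containment to the scalar order statistics of $\sigma(\bm{w})$ via the monotone equivalence $\mathbb{Z}^{\text{S}}(\sigma)\subseteq\tilde{\mathbb{Z}}^{\text{S}}(\bm{w})\iff\sigma\le\sigma(\bm{w})$, the binomial bound on the event that the $N_{\text{r}}$-th smallest scaling factor exceeds the $\eps$-quantile, and the Chernoff-type control of the binomial left tail are exactly the ingredients used there. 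Two small points deserve explicit attention if you write this out in full. First, the monotone equivalence requires the family $\mathbb{Z}^{\text{S}}(\sigma)$ to be nested in $\sigma$, which holds because the shape set (in the paper's instantiation, $\mathbb{Y}^{\text{SAS}}_l\ominus\{\bm{c}\}$) is convex and contains the origin; moreover the equivalence degenerates when $\bm{\zeta}_{\text{c}}\notin\tilde{\mathbb{Z}}^{\text{S}}(\bm{w})$, which is precisely why the hypothesis $\bm{\zeta}_{\text{c}}\in\mathbb{Z}^{\text{P}}$ is needed to cover the case $\sigma^*=0$. Second, as you yourself note, the generic multiplicative Chernoff bound only yields the constant $8$ rather than $7.47$, so the final numerical step genuinely rests on the sharper tail estimate carried out in \cite{mammarella2022chance} and is deferred rather than derived; given that the paper itself treats the entire proposition as a citation, this deferral is acceptable.
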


The bound \eqref{eq:samplecomplexity_scaling} is independent from the number of constraints and dimension of decision variable, and thus lower than the bound defined in \eqref{eq:samplecomplexity} in most cases. Furthermore, the complexity of the inner-approximation $\mathbb{Z}^{\mathrm{S}}(\sigma^*)$ is fully determined by the shape $\mathbb{Z}^{\mathrm{SAS}}$. However, it is to note that the conservatism of Proposition~\ref{prop:scaling} depends on how well $\mathbb{Z}^{\mathrm{SAS}}$ captures the shape of the $\eps$-CSS \eqref{eq:chanceconsset}, and that for every uncertainty sample an optimization problem needs to be solved, see \eqref{eq:scaling}. Depending on the candidate SAS shape, this optimization problem might be computationally infeasible. A natural candidate for the SAS shape $\mathbb{Z}^{\mathrm{SAS}}$ is a sampled-polytope SAS $\tilde{\mathbb{Z}}^{\mathrm{SAS}}$, akin to \eqref{eq:sampledconsset}, constructed with a fixed number $\tilde{N}_{\mathrm{s}}$ of ``design" uncertainty samples $\tilde{\bm{w}}^{(i)}$, $i \in \mathbb{N}_1^{\tilde{N}_{\mathrm{s}}}$. Thus, the complexity of the shape can be determined apriori by $\tilde{N}_{\mathrm{s}}$. For such polytopic SAS, the optimization in \eqref{eq:scaling} can be done efficiently via linear programming.
For other possible choices of SAS shapes, we refer to the discussion in \cite{mammarella2022chance}.

\subsubsection{Approximation Procedures tailored to Section~\ref{sec:conssampling}}
\paragraph{Direct Approximation}
Draw $N_{\mathrm{s}} \ge N_{\mathrm{LT}}(\eps,\,\conf_y,\,n_{\zeta},\,n_{\mathrm{c},y})$ uncertainty samples $\bm{w}^{(i)}$, with $N_{\mathrm{LT}}$ from \eqref{eq:samplecomplexity} and $n_{\mathrm{c},y}$ is the number of output constraints in \eqref{eq:outputcons}.
By Proposition~\ref{prop:sampling}, the set $\mathbb{Y}^{\mathrm{S}}_l \coloneqq \cap_{i=0}^{N_{\mathrm{s}}}\tilde{\mathbb{Y}}^{\mathrm{S}}_l\left(\bm{w}^{(i)}\right)$ satisfies $\Pr{\mathbb{Y}^{\mathrm{S}}_l \subseteq \mathbb{Y}^{\mathrm{P}}_l} \ge 1-\conf_y$, thus retrieving a valid inner-approximation of \eqref{eq:ocp_original_chancecons} with confidence~$\conf_y$.
\paragraph{Probabilistic Scaling}
Draw $\tilde{N}_{\mathrm{s}}$ ``design" uncertainty samples $\bm{w}^{(i)}$ and construct a polytopic candidate scalable simple approximating set (SAS) $\mathbb{Y}^{\mathrm{S}}_l(\sigma) \coloneqq \{\bm{c}_{\mathrm{S}}\} \oplus \sigma \left(\mathbb{Y}^{\mathrm{SAS}}_l \ominus \left\{\bm{c}_{\mathrm{S}}\right\}\right)$ with scaling factor $\sigma > 0$, where $\mathbb{Y}^{\mathrm{SAS}}_l \coloneqq \cap_{i=0}^{\tilde{N}_{\mathrm{s}}}\tilde{\mathbb{Y}}^{\mathrm{S}}_l\left(\bm{w}^{(i)}\right)$ and $\bm{c}_{\mathrm{S}}$ is a center (e.g., Chebyshev or geometric center) of $\mathbb{Y}^{\mathrm{SAS}}_l$. Draw $N_{\mathrm{s}} \ge N_{\mathrm{PS}}(\eps,\,\conf_y)$ uncertainty samples with $N_{\mathrm{PS}}$ from \eqref{eq:samplecomplexity_scaling} and apply Proposition~\ref{prop:scaling} to retrieve the set $\mathbb{Y}^{\mathrm{S}}_l(\sigma^*)$ that satisfies $\Pr{\mathbb{Y}^{\mathrm{S}}_l(\sigma^*) \subseteq \mathbb{Y}^{\mathrm{P}}_l} \ge 1-\conf_y$, thus retrieving a valid inner-approximation of \eqref{eq:ocp_original_chancecons}  with confidence $\conf_y$.


\subsection{Data-driven Design of Terminal Ingredients} \label{app:stabilizingIngredients}
Here, we describe approaches on how to determine a stabilizing feedback gain $\bm{K}$, weighting matrix $\bm{P}$, and terminal set $\mathbb{X}_{\gls{Tf}}$ from Assumption~\ref{assum:stabilizingIngredients}, given data from Assumption~\ref{assum:trajData}.

\paragraph{Stabilizing feedback gain $\bm{K}$}
The feedback gain $\bm{K}$ should be chosen such that $\tilde{\bm{A}}_{\mathrm{cl},j} \coloneqq \tilde{\bm{A}}_j + \bm{K}\tilde{\bm{B}}_j$ is Schur stable for all $j \in \mathbb{N}_1^{N_{\mathrm{v}}}$, see Assumption~\ref{assum:stabilizingIngredients}(a). From \cite[Lemma~3]{pan2023data}, $\tilde{\bm{A}}_{\mathrm{cl},j}$ can be equivalently described as
\begin{equation} \label{eq:cldata}
    \tilde{\bm{A}}_{\mathrm{cl},j} = \bm{H}_{\xi,j}^{+} \bm{\Theta},
\end{equation}
with $\bm{H}_{\xi,j}^{+} \coloneqq \mat{\gls{xid}_{2} \, \cdots \, \gls{xid}_{T+1}} -  \tilde{\bm{E}} \bm{H}_1\left(\glsd{dd}_{T,j}\right)$ and for some $\bm{\Theta}$ that satisfies
\begin{equation}
    \mat{\bm{H}_1\left(\glsd{xid}_{T}\right) \\ \bm{H}_1\left(\glsd{ud}_{T}\right)} \bm{\Theta} = \mat{\bm{I}_{\glsd{xi}} \\ \bm{K}},
\end{equation}
where $\glsd{dd}_{T,j}$ is constructed using \eqref{eq:consistency_solutions} and the vertex $\glsd{dd}_{\glsd{xi}+\glsd{u},j}$ from the set of consistent disturbance data \eqref{eq:distset_consistency}. Following ideas from \cite{de2019formulas}, a feedback gain $\bm{K}$ that stabilizes all $\tilde{\bm{A}}_{\mathrm{cl},j}$, $j \in \mathbb{N}_1^{N_{\mathrm{v}}}$, can be found by solving the linear matrix inequalities
\begin{equation} \label{eq:determineK}
    \mat{\bm{H}_1\left(\glsd{xid}_{T}\right) \bm{\Theta} & \bm{H}_{\xi,j}^{+} \bm{\Theta} \\
    \left(\bm{H}_{\xi,j}^{+} \bm{\Theta}\right)^{\top} & \bm{H}_1\left(\glsd{xid}_{T}\right) \bm{\Theta}
    } \succ \bm{0}~~\forall j \in \mathbb{N}_1^{N_{\mathrm{v}}}
\end{equation}
for $\bm{\Theta}$, resulting in $\bm{K} = \bm{H}_1\left(\glsd{ud}_{T}\right) \bm{\Theta} \left(\bm{H}_1\left(\glsd{xid}_{T}\right) \bm{\Theta}\right)^{-1}$.

\paragraph{Weighting matrix $\bm{P}$}
The matrix $\bm{P}$ should be chosen such that Assumption~\ref{assum:stabilizingIngredients}(b) holds for given weights $\bm{Q}$, $\bm{R}$ and feedback gain $\bm{K}$. With the matrix vertices $\tilde{\bm{A}}_{\mathrm{cl},j}$, $j \in \mathbb{N}_1^{N_{\mathrm{v}}}$, we can find a matrix $\bm{P}$ that satisfies Assumption~\ref{assum:stabilizingIngredients}(b) by solving
\begin{subequations} \label{eq:determineP}
    \begin{align}
        & \underset{\tilde{\bm{P}}}{\mathrm{minimize}}  ~~\mathrm{trace}(\tilde{\bm{P}}) \\
        \mathrm{s.t.} & \mat{\tilde{\bm{P}} - \bm{Q}_{P} & \tilde{\bm{A}}_{\mathrm{cl},j}^{\top} \tilde{\bm{P}} \\
        \tilde{\bm{P}} \tilde{\bm{A}}_{\mathrm{cl},j} & \tilde{\bm{P}}
        } \succ \bm{0}~~\forall j \in \mathbb{N}_1^{N_{\mathrm{v}}},\\
        &~\tilde{\bm{P}} - \bm{Q}_{P} \succ \bm{0},
    \end{align}
\end{subequations}
with $\bm{Q}_{P} \coloneqq \bm{K}^{\top} \bm{R} \bm{K} +  \tilde{\bm{E}} \bm{Q} \tilde{\bm{E}}^{\top} $, $\bm{P} = \tilde{\bm{P}} - \tilde{\bm{E}} \bm{Q} \tilde{\bm{E}}^{\top}$. We remark that \eqref{eq:determineK} and \eqref{eq:determineP} can be simplified by over-approximating the set of matrix vertices \eqref{eq:setsysmat} via interval matrices, and then using the result from \cite{almao2007vertex} to reduce the number of to-be-checked vertices.

\paragraph{Terminal set $\mathbb{X}_{\gls{Tf}}$}
First, let us denote the constraint set of the extended state $\gls{xi}$ as $\mathbb{X} \coloneqq \left\{ \gls{xi} \in \mathbb{R}^{\glsd{xi}} ~\left|~ \bm{G}_{\xi} \gls{xi} \le \bm{g}_{\xi} \right.\right\}$. The set $\mathbb{X}$ can be constructed by considering the input and output constraints \eqref{eq:constraints} as well as the definition of the extended state in \eqref{eq:extstate}. The terminal set $\mathbb{X}_{\gls{Tf}}$ is defined as a subset of $\mathbb{X}$ that is RPI under the control law $\gls{u}_k = \bm{K} \gls{xi}_k$. By employing \cite[Section~5.3]{blanchini2015set}, we can make use of the given matrix vertices $\tilde{\bm{A}}_{\mathrm{cl},j}$, $j \in \mathbb{N}_1^{N_{\mathrm{v}}}$ and disturbance bound $\gls{dset}$ from \eqref{eq:distset} to determine $\mathbb{X}_{\gls{Tf}}$ through $\mathbb{X}_{\gls{Tf}} = \cap_{i=0}^{\infty} \mathbb{X}^{i}$, with $\mathbb{X}^{0} = \mathbb{X}$ and
\begin{equation} \label{eq:termset_invariant}
    \mathbb{X}^{i+1} = \left\{ \gls{xi} \in \mathbb{X}^{i} \left| \begin{array}{l} 
	\forall \gls{d} \in \gls{dset},\, j \in \mathbb{N}_1^{N_\mathrm{v}}: \\
	\bm{K}\gls{xi} \in \mathbb{U},~
    \tilde{\bm{A}}_{\mathrm{cl},j} \gls{xi} +  \tilde{\bm{E}} \gls{d} \in \mathbb{X}^{i}
	\end{array} \hspace{-2mm}\right.\right\}.
\end{equation}
In practice, the recursion \eqref{eq:termset_invariant} is terminated once $\mathbb{X}^{i+1} = \mathbb{X}^{i}$ for some $i \in \mathbb{N}$, yielding $\mathbb{X}_{\gls{Tf}} = \mathbb{X}^{i}$. Approximation techniques for termination after user-chosen finite iterations are described in \cite[Section~5.3]{blanchini2015set}. Note that, as $\gls{dset}$ is polytopic, it is sufficient to only take its vertices into account.

\subsection{Proof of Theorem~\ref{th:recfeas}} \label{app:recfeasproof}
\paragraph{Proof of Recursive Feasibility} By robustness of the first-step constraint \eqref{eq:constraints_firststep}, $\gls{var}_k \in \mathbb{C}_{\mathrm{R}}$ implies $\gls{xi}_{k+1} \in \gls{roa}$. By construction, it holds that $\gls{roa} \subset \left\{\gls{xi} \,\left|\, \mathbb{F}\left(\gls{xi}\right) \neq \emptyset \right.\right\}$, which proves the claim.

\paragraph{Proof of Closed-loop Constraint Satisfaction} With $\gls{xi}_0 \in \gls{roa}$, a feasible pair $\col{\gls{xi}_0,\, \gls{v}_{\mathrm{f},0},\,\tilde{\bm{\alpha}}_0} \in \mathbb{C}$ exists by design. Closed-loop input constraint satisfaction follows from recursive feasibility and the constraint $\gls{v}_{0|k} + \bm{K} \gls{xi}_k \in \tilde{\mathbb{U}}_0 = \mathbb{U}$, which is included in the constraint set $\mathbb{C}$.
Furthermore, it holds that $\mathbb{C} \subseteq \tilde{\mathbb{Y}}_0$, and $\tilde{\mathbb{Y}}_0 \subseteq \mathbb{Y}_0^P$ with confidence $1-\conf_y$ by design (see Section~\ref{sec:conssampling}). Thus, the chance constraint $\Pr{\bm{G}_y \gls{y}_{0|k} \le \bm{g}_y} \ge 1-\eps$ is satisfied with confidence $1-\conf_y$ for all feasible $\gls{var}_k \in \mathbb{C}$, $k \ge 0$, which is sufficient for satisfaction of chance constraint \eqref{eq:outputcons} in closed-loop.

\subsection{Proof of Lemma~\ref{lem:saabound}} \label{app:saabound}
Let $\svmax{\cdot}$ denote the largest singular value of a matrix. With $\hat{J}^*(\gls{xi}_k) = \hat{J}(\gls{var}^*_k) $ and the definition of the cost functions in \eqref{eq:cost_exact} and \eqref{eq:cost_sampleaverage}, we have
    \begin{align*}
        \left\lvert \hat{J}(\gls{var}^*_k) -  J(\gls{var}^*_k)\right\rvert  &= \left\lvert \norm{\gls{var}^*_k}^2_{\hat{\bm{Q}}_{\mathrm{OCP}}} - \norm{\gls{var}^*_k}^2_{\E{\bm{Q}_{\mathrm{OCP}}(\bm{w})}} \right\rvert\\
        &= \left\lvert {\gls{var}^*_k}^{\top}\left(\hat{\bm{Q}}_{\mathrm{OCP}} - \E{\bm{Q}_{\mathrm{OCP}}(\bm{w})}\right)\gls{var}^*_k \right\rvert \\
        & \le \svmax{\hat{\bm{Q}}_{\mathrm{OCP}} - \E{\bm{Q}_{\mathrm{OCP}}(\bm{w})}} {\gls{var}^*_k}^{\top}\gls{var}^*_k  \\
        & \le \svmax{\hat{\bm{Q}}_{\mathrm{OCP}} - \E{\bm{Q}_{\mathrm{OCP}}(\bm{w})}} \norm{\gls{xi}_k}^2_{\bm{P}_{\mathrm{c}}}, 
    \end{align*}
    where $\norm{\gls{xi}_k}^2_{\bm{P}_{\mathrm{c}}} \ge \norm{\gls{var}^*_k}^2 = {\gls{var}^*_k}^{\top}\gls{var}^*_k $ is used for the last step. By definition, $\hat{\bm{Q}}_{\mathrm{OCP}} = (1/\gls{Nsaa})\sum_{i=1}^{\gls{Nsaa}} \bm{Q}_{\mathrm{OCP}}\left(\bm{w}^{(i)}\right)$ is the average of $\gls{Nsaa}$ independent samples $\bm{Q}_{\mathrm{OCP}}\left(\bm{w}^{(i)}\right)$. Leveraging the upper bound $\lambda_Q \ge \max_{\bm{w}} \evmax{\bm{Q}_{\mathrm{OCP}}(\bm{w})}$, the matrix Hoeffding inequality \cite[Thm.~6.15]{wainwright2019high} yields
    \begin{align*} 
        \Pr{\svmax{\hat{\bm{Q}}_{\mathrm{OCP}} - \E{\bm{Q}_{\mathrm{OCP}}(\bm{w})}} \le \tilde{\tau}}\hspace{25mm}&\\
        \ge 1- 2\glsd{var}\exp\left(-\frac{\gls{Nsaa}\tilde{\tau}^2}{2\lambda_Q^2}\right)&
    \end{align*}
    for arbitrary $\tilde{\tau} > 0$. The assertion then follows from introducing the confidence $\delta_{\mathrm{s}}$ and choosing $\tilde{\tau} =\tau\left(\gls{Nsaa},\,\delta_{\mathrm{s}}\right)$.
    
\subsection{Proof of Theorem~\ref{th:rasie}} \label{app:stabproof}
In order to prove RASiE, we construct a stochastic ISS Lyapunov function based on the optimal cost \eqref{eq:ocp}, and then employ the following proposition.
\begin{proposition}[Stochastic ISS Lyapunov Function \cite{mcallister2022nonlinear}] \label{prop:siss}
    The origin of the closed-loop system \eqref{eq:system_cl_nonminimal} is RASiE on the RPI set $\gls{roa}$ if there exists a function $V: \gls{roa} \rightarrow \mathbb{R}_{\ge 0}$ and functions $\rho_1$, $\rho_2$, $\rho_3 \in \mathscr{K}_{\infty}$, $\rho_4$, $\rho_5 \in \mathscr{K}$ such that  for all $\gls{xi}_{k} \in \gls{roa}$, it holds that
    \begin{subequations}
        \begin{align}
            \rho_1\left(\norm{\gls{xi}_k}\right) \le &~V\left(\gls{xi}_{k}\right) \le \rho_2\left(\norm{\gls{xi}_k}\right) + \rho_4\left(\trace{\bm{\Sigma}}\right), \label{eq:siss_bound}\\
            \E{V\left(\gls{xi}_{k+1}\right)} - &~V\left(\gls{xi}_{k}\right) \le - \rho_3\left(\norm{\gls{xi}_k}\right) + \rho_5\left(\trace{\bm{\Sigma}}\right). \label{eq:siss_descent}
        \end{align}
    \end{subequations}
    Function $V$ is then called a stochastic ISS Lyapunov function.
\end{proposition}

Let us first consider the case where the candidate solution \eqref{eq:candsol} for time-step $k+1$ is feasible. Using Lemma~\ref{lem:saabound} and the expected cost \eqref{eq:cost_exact}, we can bound the expected increase of the optimal SAA cost \eqref{eq:ocp} with a probability of at least $\delta_{\mathrm{S}}$ uniformly for all $\gls{xi}_k \in \gls{roa}$ (cf. Lemma~\ref{lem:saabound}) as
\begin{align} \label{eq:cost_increase}
    &\E{\hat{J}^*\left(\gls{xi}_{k+1}\right)~\big|~\tilde{\gls{v}}_{\mathrm{f},k+1}~\mathrm{feasible}} - \hat{J}^*\left(\gls{xi}_k\right) \notag \\
    & \le  \E{\hat{J}\left(\gls{xi}_{k+1},\,\tilde{\gls{v}}_{\mathrm{f},k+1}\right)} - \hat{J}^*\left(\gls{xi}_k\right) \notag\\
    & \overset{\eqref{eq:saacostbound}}{\le}  \E{J\left(\gls{xi}_{k+1},\,\tilde{\gls{v}}_{\mathrm{f},k+1}\right)} - J\left(\gls{xi}_k,\,\gls{v}^*_{\mathrm{f},k}\right) \notag\\
    &~~+ \tau \E{\max\limits_{\left[\tilde{\bm{A}} ~\tilde{\bm{B}}\right] \in \tilde{\mathbb{A}}} \norm{\tilde{\bm{A}}\gls{xi}_k + \tilde{\bm{B}} \gls{u}_k + \tilde{\bm{E}}\gls{d}_k}^2_{\bm{P}_{\mathrm{c}}}} + \tau \norm{\gls{xi}_k}^2_{\bm{P}_{\mathrm{c}}} \notag \\ 
    & \le  \E{J\left(\gls{xi}_{k+1},\,\tilde{\gls{v}}_{\mathrm{f},k+1}\right)} - J\left(\gls{xi}_k,\,\gls{v}^*_{\mathrm{f},k}\right) + \tau \E{\norm{\tilde{\bm{E}}\gls{d}_k}^2_{\bm{P}_{\mathrm{c}}}} \notag\\
    &~~+ \tau \max\limits_{\left[\tilde{\bm{A}} ~\tilde{\bm{B}}\right] \in \tilde{\mathbb{A}}} \norm{\mat{\tilde{\bm{A}} ~~ \tilde{\bm{B}}}\mat{\gls{xi}_k \\ \gls{u}_k}}^2_{\bm{P}_{\mathrm{c}}} + \tau \norm{\gls{xi}_k}^2_{\bm{P}_{\mathrm{c}}}
\end{align}
For the increase of the expected cost \eqref{eq:cost_exact} in \eqref{eq:cost_increase}, we retrieve
\begin{align}
    &  \E{J\left(\gls{xi}_{k+1},\,\tilde{\gls{v}}_{\mathrm{f},k+1}\right)} - J\left(\gls{xi}_k,\,\gls{v}^*_{\mathrm{f},k}\right) \notag\\
    &= \mathrm{E}\left[\sum\limits_{l=1}^{\gls{Tf}} \left( \norm{\gls{y}^*_{l|k}+\tilde{\bm{E}}^{\top}\tilde{\bm{A}}_{\mathrm{cl}}^l\tilde{\bm{E}}\gls{d}_k}^2_{\bm{Q}} \right. \right. \notag \\ 
    &~~+ \left. \left.\norm{\gls{u}^*_{l|k} +\bm{K}\tilde{\bm{A}}_{\mathrm{cl}}^{l-1}\tilde{\bm{E}}\gls{d}_k}^2_{\bm{R}} \right) + \norm{\tilde{\bm{A}}_{\mathrm{cl}}\gls{xi}^*_{\gls{Tf}|k} + \tilde{\bm{A}}_{\mathrm{cl}}^{\gls{Tf}}\tilde{\bm{E}}\gls{d}_k}^2_{\bm{P}} \right]\notag\\
    &~~ - \E{\sum\limits_{l=0}^{\gls{Tf}-1} \left( \norm{\gls{y}^*_{l|k}}^2_{\bm{Q}} + \norm{\gls{u}^*_{l|k}}^2_{\bm{R}} \right) + \norm{\gls{xi}^*_{\gls{Tf}|k}}^2_{\bm{P}}} \notag\\
    & \le \mathrm{E}\left[\sum\limits_{l=1}^{\gls{Tf}} \left( \norm{\tilde{\bm{E}}^{\top}\tilde{\bm{A}}_{\mathrm{cl}}^{l}\tilde{\bm{E}}\gls{d}_k}^2_{\bm{Q}} + \norm{\bm{K}\tilde{\bm{A}}_{\mathrm{cl}}^{l-1}\tilde{\bm{E}}\gls{d}_k}^2_{\bm{R}} \right) \right.\notag\\
    &~~ \left. + \norm{\tilde{\bm{A}}_{\mathrm{cl}}^{\gls{Tf}}\tilde{\bm{E}}\gls{d}_k}^2_{\bm{P}} \right] - \E{\norm{\gls{y}^*_{0|k}}^2_{\bm{Q}} + \norm{\gls{u}^*_{0|k}}^2_{\bm{R}} + \norm{\gls{xi}^*_{\gls{Tf}|k}}^2_{\bm{P}}} \notag \\
    &~~ + \E{\norm{\gls{y}^*_{\gls{Tf}|k}}^2_{\bm{Q}} + \norm{\gls{u}^*_{\gls{Tf}|k}}^2_{\bm{R}} + \norm{\tilde{\bm{A}}_{\mathrm{cl}}\gls{xi}^*_{\gls{Tf}|k}}^2_{\bm{P}}}\notag \\
    & \le \E{\norm{\tilde{\bm{E}}\gls{d}_k}^2_{\bm{P}}} - \E{\norm{\gls{y}^*_{0|k}}^2_{\bm{Q}} + \norm{\gls{u}^*_{0|k}}^2_{\bm{R}} + \norm{\gls{xi}^*_{\gls{Tf}|k}}^2_{\bm{P}}} \notag \\
    &~~ + \E{\norm{\tilde{\bm{E}}^{\top}\tilde{\bm{A}}_{\mathrm{cl}}\gls{xi}^*_{\gls{Tf}|k}}^2_{\bm{Q}} + \norm{\bm{K}\gls{xi}^*_{\gls{Tf}|k}}^2_{\bm{R}} + \norm{\tilde{\bm{A}}_{\mathrm{cl}}\gls{xi}^*_{\gls{Tf}|k}}^2_{\bm{P}}}\hspace{-5mm}\notag \\
    & \le \E{\norm{\tilde{\bm{E}}\gls{d}_k}^2_{\bm{P}} - \norm{\gls{y}^*_{0|k}}^2_{\bm{Q}} - \norm{\gls{u}_k}^2_{\bm{R}}}, 
\end{align}
using the Cauchy-Schwarz inequality and Assumption~\ref{assum:stabilizingIngredients}(b).

Now, for the case where the candidate solution \eqref{eq:candsol} for time-step $k+1$ is infeasible, we bound the expected cost increase of the optimal SAA cost \eqref{eq:ocp} using \eqref{eq:costbound} as
\begin{align*}
    &\E{\hat{J}^*\left(\gls{xi}_{k+1}\right)~\big|~\tilde{\gls{v}}_{\mathrm{f},k+1}~\mathrm{infeasible}} - \hat{J}^*\left(\gls{xi}_k\right)\notag\\
    &\le \E{\max\limits_{\left[\tilde{\bm{A}} ~\tilde{\bm{B}}\right] \in \tilde{\mathbb{A}}} \norm{\tilde{\bm{A}}\gls{xi}_k + \tilde{\bm{B}} \gls{u}_k + \tilde{\bm{E}}\gls{d}_k}^2_{\bm{P}_{\mathrm{u}}} - \norm{\gls{xi}_k}^2_{\bm{P}_{\mathrm{l}}}} \notag\\
    &\le \E{\max\limits_{\left[\tilde{\bm{A}} ~\tilde{\bm{B}}\right] \in \tilde{\mathbb{A}}} \norm{\mat{\tilde{\bm{A}} ~~ \tilde{\bm{B}}}\mat{\gls{xi}_k \\ \gls{u}_k}}^2_{\bm{P}_{\mathrm{u}}} \hspace{-1.5mm}- \norm{\gls{xi}_k}^2_{\bm{P}_{\mathrm{l}}} + \norm{\tilde{\bm{E}}\gls{d}_k}^2_{\bm{P}_{\mathrm{u}}}},
\end{align*}
Thus, applying the law of total probability, we obtain
\begin{align}
    &\E{\hat{J}^*\left(\gls{xi}_{k+1}\right)} - \hat{J}^*\left(\gls{xi}_k\right)\notag\\
    &\le \left(1-\eps_{\mathrm{f}}\right) \E{\norm{\tilde{\bm{E}}\gls{d}_k}^2_{\bm{P}+\tau\bm{P}_{\mathrm{c}}} - \norm{\gls{y}^*_{0|k}}^2_{\bm{Q}} - \norm{\gls{u}_k}^2_{\bm{R}}} \notag\\
    &+ \left(1-\eps_{\mathrm{f}}\right)\tau \left( \max\limits_{\left[\tilde{\bm{A}} ~\tilde{\bm{B}}\right] \in \tilde{\mathbb{A}}} \norm{\mat{\tilde{\bm{A}} ~~ \tilde{\bm{B}}}\mat{\gls{xi}_k \\ \gls{u}_k}}^2_{\bm{P}_{\mathrm{c}}} + \norm{\gls{xi}_k}^2_{\bm{P}_{\mathrm{c}}}\right) \notag \\
    &+ \eps_{\mathrm{f}}\E{\max\limits_{\left[\tilde{\bm{A}} ~\tilde{\bm{B}}\right] \in \tilde{\mathbb{A}}} \norm{\mat{\tilde{\bm{A}} ~~ \tilde{\bm{B}}}\mat{\gls{xi}_k \\ \gls{u}_k}}^2_{\bm{P}_{\mathrm{u}}} - \norm{\gls{xi}_k}^2_{\bm{P}_{\mathrm{l}}} + \norm{\tilde{\bm{E}}\gls{d}_k}^2_{\bm{P}_{\mathrm{u}}}}. \label{eq:costdescent}
\end{align}

Now, as a stochastic ISS Lyapunov function candidate, we define $V\left(\gls{xi}_k\right) \coloneqq \hat{J}^*\left(\gls{xi}_k\right) + \left(1-\eps_{\mathrm{f}}\right)c_{\mathrm{S}} W\left(\gls{xi}_k\right)$. By combining \eqref{eq:IOSSdescent} and \eqref{eq:costdescent}, we retrieve for the expected descent of the Lyapunov function candidate
\begin{align}
    &\E{V\left(\gls{xi}_{k+1}\right)} - V\left(\gls{xi}_k\right)\notag\\
    &\le \left(1-\eps_{\mathrm{f}}\right) \E{\norm{\tilde{\bm{E}}\gls{d}_k}^2_{\bm{P}+\tau\bm{P}_{\mathrm{c}}} - \norm{\gls{y}^*_{0|k}}^2_{\bm{Q}} - \norm{\gls{u}_k}^2_{\bm{R}}} \notag \\
    &~\,+ \eps_{\mathrm{f}}\E{\max\limits_{\left[\tilde{\bm{A}} ~\tilde{\bm{B}}\right] \in \tilde{\mathbb{A}}} \norm{\mat{\tilde{\bm{A}} ~~ \tilde{\bm{B}}}\mat{\gls{xi}_k \\ \gls{u}_k}}^2_{\bm{P}_{\mathrm{u}}} - \norm{\gls{xi}_k}^2_{\bm{P}_{\mathrm{l}}} + \norm{\tilde{\bm{E}}\gls{d}_k}^2_{\bm{P}_{\mathrm{u}}}}  \notag\\
    &~~+ \left(1-\eps_{\mathrm{f}}\right)\tau \left( \max\limits_{\left[\tilde{\bm{A}} ~\tilde{\bm{B}}\right] \in \tilde{\mathbb{A}}} \norm{\mat{\tilde{\bm{A}} ~~ \tilde{\bm{B}}}\mat{\gls{xi}_k \\ \gls{u}_k}}^2_{\bm{P}_{\mathrm{c}}} + \norm{\gls{xi}_k}^2_{\bm{P}_{\mathrm{c}}}\right)\notag \\
    &~~ + \left(1-\eps_{\mathrm{f}}\right)\E{c_{\mathrm{S}}c_u \norm{\gls{u}_k}^2 + c_{\mathrm{S}}c_y \norm{\gls{y}_k}^2 + c_{\mathrm{S}}c_d \norm{\gls{d}_k}^2} \notag \\
    &~~ -\left(1-\eps_{\mathrm{f}}\right)\frac{c_{\mathrm{S}}}{2}\norm{\gls{xi}_k}^2 \notag \\
    &\le \E{\norm{\gls{d}_k}^2_{\bm{P}_d}} -\left(1-\eps_{\mathrm{f}}\right)\left(\frac{c_{\mathrm{S}}}{2}\norm{\gls{xi}_k}^2 + \norm{\gls{u}_k}^2_{\bm{R}-c_{\mathrm{S}}c_u\bm{I}_{\glsd{u}}}  \phantom{\norm{\mat{\gls{xi}_k \\ \gls{u}_k}}^2_{\bm{P}_{\mathrm{u}}}}\right. \notag \\
    &~~ + \tau \left( \max\limits_{\left[\tilde{\bm{A}} ~\tilde{\bm{B}}\right] \in \tilde{\mathbb{A}}} \norm{\mat{\tilde{\bm{A}} ~~ \tilde{\bm{B}}}\mat{\gls{xi}_k \\ \gls{u}_k}}^2_{\bm{P}_{\mathrm{c}}} + \norm{\gls{xi}_k}^2_{\bm{P}_{\mathrm{c}}}\right)\notag \\
    &~~ \left. - \frac{\eps_{\mathrm{f}}}{1-\eps_{\mathrm{f}}} \left(\max\limits_{\left[\tilde{\bm{A}} ~\tilde{\bm{B}}\right] \in \tilde{\mathbb{A}}} \norm{\mat{\tilde{\bm{A}} ~~ \tilde{\bm{B}}}\mat{\gls{xi}_k \\ \gls{u}_k}}^2_{\bm{P}_{\mathrm{u}}}- \norm{\gls{xi}_k}^2_{\bm{P}_{\mathrm{l}}}\right)\right)  \notag\\ 
    &\le -\left(1-\eps_{\mathrm{f}}\right) \min_j\left(\evmin{\bm{Q}^{\mathrm{S}}_j}\right) \norm{\gls{xi}_k}^2 + \evmax{\bm{P}_d}\trace{\bm{\Sigma}},
\end{align}
with $\bm{P}_d \coloneqq \tilde{\bm{E}}^{\top}\left(\left(1-\eps_{\mathrm{f}}\right)\left(\bm{P}+\tau\bm{P}_{\mathrm{c}}\right) + \eps_{\mathrm{f}}\bm{P}_{\mathrm{u}}\right)\tilde{\bm{E}} + c_{\mathrm{S}}c_d\bm{I}_p$ and $\bm{Q}^{\mathrm{S}}_j$ from \eqref{eq:stabcond}. If $\bm{Q}^{\mathrm{S}}_j \succ \bm{0}$ for all $\mat{\tilde{\bm{A}}_j ~ \tilde{\bm{B}}_j} \in \tilde{\mathbb{A}}$, property \eqref{eq:siss_descent} is satisfied with a probability of at least $\delta_{\mathrm{S}}$ uniformly for all $\gls{xi}_k \in \gls{roa}$ (cf. Lemma~\ref{lem:saabound}) for $\rho_3\left(\norm{\gls{xi}_k}\right) = \left(1-\eps_{\mathrm{f}}\right) \min_j\left(\evmin{\bm{Q}^{\mathrm{S}}_j}\right) \norm{\gls{xi}_k}^2 $ and $\rho_5\left(\trace{\bm{\Sigma}}\right) = \evmax{\bm{P}_d}\trace{\bm{\Sigma}}$. Furthermore, property \eqref{eq:siss_bound} holds for $\rho_1\left(\norm{\gls{xi}_k}\right) = \norm{\gls{xi}_k}_{\bm{P}_{\mathrm{l}}}^2$, $\rho_2\left(\norm{\gls{xi}_k}\right) = \norm{\gls{xi}_k}_{\bm{P}_{\mathrm{u}}+\bm{P}_W}^2$, and arbitrary $\rho_4 \in \mathscr{K}$. Therefore, with confidence $\delta_{\mathrm{S}}$, $V\left(\gls{xi}_k\right)$ is a valid stochastic ISS Lyapunov function on $\gls{roa}$, and the assertion follows from Proposition~\ref{prop:siss}.

\section*{Acknowledgment}

The authors express their gratitude to Julian Berberich for valuable discussions and feedback.

\section*{References} \vspace{-2em}
\bibliographystyle{ieeetr}
\bibliography{mybib}

\end{document}